\DeclareMathOperator{\Tr}{Tr}
\theoremstyle{definition}
\newtheorem{theorem}{Theorem}
\newtheorem{claim}{Claim}
\newtheorem{fact}{Fact}
\theoremstyle{remark}
\newcommand{\XZ}{X\! \! Z}
\newcommand{\rev}[1]{\textcolor{Red}{#1}}
\newcommand{\egr}{\textcolor{ForestGreen}} %Revisions/additions
\renewcommand{\egr}{\textcolor{Black}} 
\renewcommand{\rev}[1]{\textcolor{Black}{#1}}
\begin{document}

\title{Character randomized benchmarking for non-multiplicity-free groups with applications to subspace, leakage, and matchgate randomized benchmarking}

\author{Jahan Claes}
\affiliation{Quantum Artificial Intelligence Laboratory (QuAIL), NASA Ames Research Center, Moffett Field, CA 94035, USA}
\affiliation{USRA (RIACS), Mountain View CA 94043, USA}
\affiliation{Department of Physics and Institute for Condensed Matter Theory, University of Illinois at Urbana-Champaign, Urbana, IL 61801, USA}

\author{Eleanor Rieffel}
\affiliation{Quantum Artificial Intelligence Laboratory (QuAIL), NASA Ames Research Center, Moffett Field, CA 94035, USA}

\author{Zhihui Wang}
\email{zhihui.wang@nasa.gov}
\affiliation{Quantum Artificial Intelligence Laboratory (QuAIL), NASA Ames Research Center, Moffett Field, CA 94035, USA}
\affiliation{USRA (RIACS), Mountain View CA 94043, USA}

\begin{abstract}
    Randomized benchmarking (RB) is a powerful method for determining the error rate of experimental quantum gates. 
    Traditional RB, however, is restricted to gatesets, such as the Clifford group, that form a unitary 2-design. The recently introduced character RB can benchmark more general gates using techniques from representation theory; up to now, however, this method has only been applied to ``multiplicity-free" groups, a mathematical restriction on these groups. In this paper, we extend the original character RB derivation to explicitly treat non-multiplicity-free groups, and derive several applications. First, we derive a rigorous version of the recently introduced subspace RB, which seeks to characterize a set of one- and two-qubit gates that are symmetric under SWAP. 
    Second, we develop a new leakage RB protocol that applies to more general groups of gates. Finally, we derive a scalable RB protocol for the matchgate group, a group that like the Clifford group is non-universal but becomes universal with the addition of one additional gate. This example provides one of the few examples of a scalable non-Clifford RB protocol.
    In all three cases, compared to existing theories, our method requires similar resources, but either provides a more accurate estimate of gate fidelity, or applies to a more general group of gates.
    In conclusion, we discuss the potential, and challenges, of using non-multiplicity-free character RB to develop new classes of scalable RB protocols and methods of characterizing specific gates.
\end{abstract}

\date{\today}

\maketitle

\section{Introduction}

Advances in accurate and scalable methods for characterizing the performance of quantum gates are critical for the realization of large-scale reliable quantum computers. Quantum process tomography can, in theory, completely characterize an unknown quantum channel \cite{o2004quantum,chuang1997prescription,poyatos1997complete,mohseni2008quantum}, but requires resources that scale exponentially in the number of qubits \cite{mohseni2008quantum}. In addition, any tomographic approach will also include the effect of state preparation and measurement (SPAM) errors, which may be of the same order as the gate error that is being characterized.

Randomized benchmarking (RB) \cite{emerson2005scalable,knill2008randomized,magesan2011scalable,magesan2012characterizing} provides a method to scalably characterize gates that form a group $G$ with the additional mathematical property of being a ``unitary 2-design" \cite{dankert2009exact}, most frequently the Clifford group \cite{gottesman1998theory,gottesman1998heisenberg,aaronson2004improved}. Rather than completely characterizing a noise channel, RB determines the average fidelity, a standard measure of gate quality that can be related to other common measures such as entanglement and process fidelity \cite{nielsen2002simple,horodecki1999general} and used to bound the gate error rate \cite{sanders2015bounding}. RB works by experimentally measuring the overall fidelity of a random circuit as a function of the number of applied gates $U\in G$ and fitting this to an exponential decay. The parameters of the decay then determine the average fidelity of a single gate. Unlike tomographic methods, RB provides an estimate for the average fidelity that is independent of SPAM errors.

Standard RB, however, is limited to groups that form a unitary 2-design and whose elements can be efficiently compiled (i.e. decomposed) into elementary gates. This limitation prevents standard RB from characterizing any set of quantum gates that are large enough to be universal for quantum computation \cite{gottesman1998heisenberg,aaronson2004improved}, and also prevents standard RB from characterizing smaller subgroups of 2-designs. There are ongoing efforts to extend RB to a larger class of gates. Interleaved RB was proposed to characterize individual Clifford group elements \cite{magesan2012efficient} as well as the $T$-gates needed for universal quantum computation \cite{harper2017estimating}, but these methods are specific to the gates considered and only produce bounds on the fidelity. Ref. \cite{carignan2015characterizing} developed a method to extract the fidelity of the dihedral group on one qubit, which is not a unitary 2-design and includes the $T$ gate, while \cite{cross2016scalable} proposed a method of extending dihedral RB to an arbitrary number of qubits. Refs. \cite{brown2018randomized,francca2018approximate} extended this work by deriving decay formulas for the fidelity of random circuits of arbitrary groups, but these formulas involved fitting sums of multiple exponentials, and the decay parameters could not be related to the average fidelity. Ref. \cite{helsen2019new} introduced character RB to address these limitations, providing a method that only requires fitting a single exponential decay and directly predicts the average fidelity. However, this was only explored for ``multiplicity-free" groups, a mathematical limitation on the group's representations (see below).

In this work, we provide a generalization of character RB that applies to \rev{groups with multiplicity,} \egr{which we underpin with rigorous derivations. This rigor enables us to provide conditions under which instantiations of the framework yield practical RB protocols.} We illustrate our generalized approach with applications to three distinct situations of practical interest: benchmarking of gates with subspace preserving properties, characterization of leakage, and benchmarking of the matchgate group.

\egr{Our main contributions include:}
\begin{itemize}
 \item \rev{We provide a derivation of character RB for non-multiplicity-free groups $G$. This} RB method allows us to directly predict the average fidelity of the gates in $G$ as in \cite{helsen2019new} but unlike \cite{brown2018randomized,francca2018approximate}. For non-multiplicity-free groups, our method potentially requires fitting a sum of multiple exponentials rather than a single exponential; however, the number of exponentials is significantly reduced compared to \cite{brown2018randomized,francca2018approximate}. 
\item \rev{As a} primary motivation for this generalization, \rev{we} improve the recently introduced subspace RB \cite{baldwin2020subspace} designed to characterize gates that preserve a subspace of the full Hilbert space. \egr{Our generalization, and its rigorous derivations, has immediate application to near-term quantum processors, including to benchmarking the gates implemented on the ion-trap quantum processor benchmarked in \cite{baldwin2020subspace}. Gates that preserve a proper subspace} can never form a 2-design, and are never multiplicity-free, necessitating a generalized RB procedure. The original work on subspace RB established decay formulas for the fidelity of certain random circuits but could only give loose bounds on the average fidelity of the gates; our method, in contrast, allows us to directly estimate the average fidelity using a similar number of experiments as the original subspace RB. \egr{While we illustrate our approach for the $U_{ZZ}$ gate seen in \cite{baldwin2020subspace}, the method can be applied directly to other gates with the same SWAP symmetry as the $U_{ZZ}$ gate. It also provides grounding for benchmarking gates with other subspace-preserving symmetries, though creativity will be required to determine when and how these gates can be combined with single qubit gates to obtain a group with the properties that yield a practical character RB protocol. The rigorous derivations underlying our approach enables us to provide examples of noise under which the \rev{estimated fidelity} yielded by \cite{baldwin2020subspace} deviates substantially from the exact fidelity \rev{provided by our method}.}
 \item 
%As an additional application of our method, 
We present a new protocol for leakage RB \cite{chasseur2015complete,wallman2016robust,wood2018quantification}, a benchmarking protocol designed to characterize qubits that can ``leak" into a non-computational section of the Hilbert space. \egr{Our approach reduces the assumptions on control in the leakage subspace required by the original leakage RB work \cite{wood2018quantification}. Such control is frequently unrealistic for quantum hardware. Our approach can be applied immediately to determine certain leakage channel error rates in, for example, quantum dot architectures, though further research will need to be done to obtain a leakage RB protocol that enables the determination of more general parameters including the average fidelity on the computational subspace.} 
\item 
%As a final application, 
We introduce a new scalable RB procedure for the matchgate group \cite{valiant2002quantum}, a class of quantum circuits that, like the Clifford group, is efficiently simulable \cite{valiant2002quantum,brod2016efficient,terhal2002classical,jozsa2008matchgates} but is very close to universal \cite{jozsa2008matchgates,brod2014computational,imamog1999quantum,terhal2002classical,brod2011extending,brod2012geometries,hebenstreit2020computational}. This procedure necessarily requires the full non-multiplicity-free character RB, and represents, along with the dihedral group \cite{cross2016scalable,helsen2019new}, one of the few non-Clifford groups that can be scalably benchmarked.
\end{itemize}

Non-multiplicity-free character RB is a general framework for benchmarking groups \egr{of quantum gates}. It provides a method for characterizing individual gates when the \rev{gates can be combined into} operations that form a group, \rev{as we illustrate in the case of subspace RB.} \rev{This RB framework also} expands the family of groups that can be scalably benchmarked, \rev{as we demonstrate with the matchgate group}. Scalable benchmarking protocols are necessary to measure gate quality in large quantum processors, \rev{especially in the presence of non-local errors such as crosstalk}. While we provide one \rev{new} example of a scalable benchmarking protocol, we expect the framework \rev{of} non-multiplicity-free character RB will lead researchers to develop further scalable examples. Benchmarking multiple overlapping groups (or subgroups of groups) may allow more accurate error characterization.
\egr{While it remains an art to find the groups and constructions that yield practical character RB protocols, we expect the grounding that our work provides to support the discovery of practical protocols for various gate sets in a variety of quantum devices in the years to come.}

Our paper is organized as follows. Section \ref{section:mathematicalPreliminaries} provides mathematical background on the Liouville representation and the definition of average fidelity. Section \ref{section:generalizedRB} outlines the full non-multiplicity-free RB protocol, and proves that it correctly estimates the average fidelity of the gates. The next sections consist of applications. Section \ref{section:subspace} demonstrates how our method can be used to rigorously estimate the fidelity of gate sets that preserve subspaces, such as those studied in \cite{baldwin2020subspace}. Section \ref{section:leakage} applies our framework to formulate a leakage RB protocol with fewer assumptions than the current state-of-the-art \cite{wood2018quantification}. Section \ref{section:matchgate} reviews the matchgate group, and describes how our method can be used to derive a scalable RB protocol for this group. \rev{Each of our applications are accompanied by computer simulations of benchmarking experiments; all our computer simulations can be reproduced in under a day on a standard laptop.} We conclude in Section \ref{section:conclusion} with discussion of possible extensions of our work, including some of the challenges. We relegate technical details to appendices, including \rev{Appendix \ref{appendix:gateDependent} which demonstrates that our method is robust to gate-dependent errors}, and Appendix \ref{appendix:WeylGroup} which provides a self-contained and straightforward proof that generalizations of the Clifford group to qudits for $d$ prime form a unitary $2$-design, which may be of \rev{independent} interest.

\section{Mathematical Preliminaries}
\label{section:mathematicalPreliminaries}
In this paper, we use the Liouville representation of quantum channels. In the Liouville representation, given some fixed basis $\{|i\rangle\}$ of our Hilbert space $\mathcal H$, a density matrix $\rho=\sum_{ij}\rho_{ij}|i\rangle\langle j|$ is represented by a column vector $|\rho\rangle\rangle=\sum_{ij}\rho_{ij}|i\rangle \otimes|j\rangle$, where we use a double-bracket $|\cdot\rangle\rangle$ to distinguish elements of $\mathcal{H}\otimes\mathcal{H}$ from elements of $\mathcal{H}$. In the case of a pure state $\rho=|\psi\rangle\langle\psi|$ we will also sometimes write $|\psi\rangle\rangle$ in place of $|\rho\rangle\rangle$. 
A quantum channel $\Lambda(\rho)=\sum_i A_i\rho A_i^\dagger$ is represented by a matrix $\hat{\Lambda}=\sum_{i}A_i\otimes A_i^*$. In this representation, matrix multiplication corresponds to composition
$$\widehat{\Lambda_1\circ\Lambda_2}=\hat{\Lambda}_1\hat{\Lambda}_2,$$ matrix-vector multiplication corresponds to applying a quantum channel $$\hat{\Lambda}|\rho\rangle\rangle = |\Lambda(\rho)\rangle\rangle,$$ and the inner product of two vectors corresponds to the Hilbert-Schmidt inner product of the corresponding density matrices $$\langle\langle\sigma|\rho\rangle\rangle = \Tr(\sigma^\dagger \rho).$$ In particular, if $M$ is a projector into some measurement outcome, the overlap $\langle\langle M|\rho\rangle\rangle$ gives the probability of measuring $M$ from a state $\rho$. For a more detailed treatment of the Liouville representation, see \cite{wood2015tensor}.

Given a unitary group $G$ acting on our Hilbert space $\mathcal H$, the natural action of $U\in G$ on density matrices is given by $U(\rho)=U\rho U^\dagger$. In the Liouville representation, such an operator is represented by $\hat{U}=U\otimes U^*$. The map $\phi: U\mapsto U\otimes U^*$ forms a representation \cite{fulton2013representation} of the group $G$ on $\mathcal{H}\otimes\mathcal{H}$ that we will refer to as the {\bf natural representation} of $G$. We can also define the {\bf $\bm{G}$-twirl} of a quantum channel $\Lambda$ as 
\begin{equation}
    \hat{\Lambda}_G = \frac{1}{|G|}\sum_{U\in G} \hat{U}^\dagger\hat{\Lambda}\hat{U}.
    \label{eq:twirl}
\end{equation}
where $|G|$ is the order of the group. We can also define the $G$-twirl by compact groups by replacing the discrete average by the integral over the Haar measure. As we will see, $\Lambda_G$ has properties similar to the original channel $\Lambda$, but it has a simpler structure that makes it more tractable to study.

If a noisy implementation of a gate $U$ results in applying the channel $(\Lambda\circ U)$, we want to characterize how close the noise channel $\Lambda$ is to the identity. We will focus on one common measure of noise, the {\bf average fidelity} $F_\Lambda$, given by 
\begin{equation}
    F_\Lambda :=\int d\psi \langle\langle\psi|\hat\Lambda|\psi\rangle\rangle.
\end{equation}
Here, $d\psi$ is the unitary-invariant Haar or Fubini-Study measure on $\mathcal{H}$. The integrand $\langle\langle\psi|\hat{\Lambda}|\psi\rangle\rangle$ is the probability of preserving a state $|\psi\rangle$ after the noise operator $\Lambda$ has been applied. The average fidelity is then simply the average of this probability over all possible input states.

\section{The generalized character randomized benchmarking procedure}
\label{section:generalizedRB}
Let $G$ be the unitary group on $\mathcal{H}$ that we wish to benchmark. We will assume $G$ is either finite or compact, so that every unitary representation decomposes into irredicible representations. Let $\phi:G\rightarrow \mathcal{L}(\mathcal{H}\otimes\mathcal{H})$ be the natural representation of $G$, which decomposes into irreducible representations as $\phi\simeq a_1\phi_1\oplus\cdots\oplus a_I\phi_I$, where $a_i\in\mathbb{Z}^+$ is the multiplicity of the irrep $\phi_i$. Let $\mathcal{H}\otimes\mathcal{H}\simeq \bigoplus_i \mathbbm{C}^{a_i}\otimes \mathcal{H}_i$ be the corresponding decomposition of Hilbert space, such that each $\phi_i$ acts nontrivially only on a single copy of $\mathcal{H}_i$. We will make the standard RB assumption that the gate error $\Lambda$ associated with $U\in G$ is independent of $U$, although this can be relaxed \cite{proctor2017randomized,wallman2018randomized,merkel2018randomized,helsen2019new}(see Appendix \ref{appendix:gateDependent}). 

Let $\overline G\subseteq G$ be a subgroup of our unitary group with natural representation $\overline\phi \simeq \overline{a}_1\overline{\phi}_1\oplus\cdots\oplus \overline{a}_{\overline{I}}\overline{\phi}_{\overline{I}}$ and corresponding decomposition $\mathcal{H}\otimes\mathcal{H}\simeq \bigoplus_i \mathbbm{C}^{\overline{a}_{\overline i}}\otimes \overline{\mathcal{H}}_{\overline i}$. We choose $\overline G$ such that for every $i\in\{1,...,I\}$, there exists a corresponding $\overline{i}\in \{1,...,\overline I\}$ such that $\mathbbm{C}^{\overline{a}_{\overline i}}\otimes \overline{\mathcal{H}}_{\overline i}\subseteq \mathbbm{C}^{a_i}\otimes\mathcal{H}_i$. One may satisfy this condition by choosing $\overline G=G$, but we will see below that for this procedure to scale with the number of qubits we must choose $\overline G\subsetneq G$. We denote the character of the irrep $\overline{\phi}_{\overline i}$ by $\chi_{\overline i}(U):=\text{Tr}\left[{\overline\phi}_{\overline i}(U)\right]$.

Our RB procedure consists of the following steps:
\begin{enumerate}
    \item For each $i\in\{1,...,I\}$, choose an initial state $|\rho_i\rangle\rangle$ and measurement projector $|M_i\rangle\rangle$ such that $|\langle\langle M_i|\hat{P}_{\overline i}|\rho_i\rangle\rangle|$ is large as possible (see Section \ref{section:scaling} below), where $\hat{P}_{\overline i}$ is the projector onto $\overline{\mathcal{H}}_{\overline i}$.
    \item For a given $N$, choose unitaries $U_0\in \overline G$ and $U_1,...,U_N\in G$ randomly and uniformly (note elements can be repeated). \rev{In the case of a compact group rather than a finite group, choose elements according to the Haar measure.} Compute $U_{N+1}=U_1^\dagger\cdots U_N^\dagger$.
    \item Prepare the state $|\rho_i\rangle\rangle$. Apply the gates $(U_1U_0), U_2,..., U_{N+1}$ sequentially, where $({U}_1{U}_0)$ is compiled as a single element of $G$.
    \item Perform a measurement of the observable \rev{$M_i$}.
    \item Repeat steps 2-4 many times, to estimate the {\bf character-weighted survival probability} 
    \rev{\begin{equation}
        S_i(N)=\frac{1}{|G|^{N+1}}\sum_{\substack{U_0\in \overline{G}\\U_1,...,U_N\in G}} \chi^*_{\overline i}(U_0)\text{Pr}_{U_0,...,U_{N+1}}
        \label{eq:survival}
    \end{equation}}\noindent 
    for each $i$, where $\text{Pr}_{U_0,...,U_{N+1}}$ is the probability of measuring $|M_i\rangle\rangle$ after applying gates \rev{$(U_1U_0),...,U_{N+1}$} to \rev{$|\rho_i \rangle\rangle$}, including the effect of gate and SPAM errors.
    \item Repeat steps 2-5 for different values of $N$.
    \item Fit each character-weighted survival probability to a function of the form
    \begin{equation}
        S_i(N) = \sum_{j=1}^{a_i} C_{i,j}\lambda_{i,j}^N
        \label{eq:fitForm}
    \end{equation}
    where the $C_{i,j}$ and $\lambda_{i,j}$ are \rev{(possibly complex)} fitting parameters independent of $N$. \rev{Note that if $\chi_{\overline{i}}$ is complex we may have $S_i$ complex, but if $\chi_{\overline{i}}$ is real the $C_{i,j}$ and $\lambda_{i,j}$ are restricted to be real or come in complex-conjugate pairs}.
    \item Estimate the average fidelity of the gate error $\Lambda$ as
    \begin{equation}
        F_\Lambda = \frac{\sum_{i=1}^I\left[\text{dim}(\mathcal{H}_i)\sum_{j=1}^{a_i}\lambda_{i,j}\right]+d}{d^2+d}
        \label{eq:fidelityEst}
    \end{equation}
    where $d:=2^n$ is the dimension of Hilbert space.
\end{enumerate}

A similar RB procedure was first proposed in \cite{helsen2019new} for groups with all $a_i=1$, the so-called {\bf multiplicity-free} groups. In this case, each character-weighted survival probability becomes a single exponential decay. Character RB had been previously proposed for the multiplicity-free dihedral group on one qubit \cite{carignan2015characterizing}, and a related approach has been used to simplify standard RB \cite{harper2019statistical}.

We note if we omit the initial gate $U_0$ and the character-weighting $\chi_{\overline i}^*(U_0)$, we get the method of \cite{cross2016scalable,brown2018randomized,francca2018approximate}; in this case, we get a \emph{single} survival probability $S(N)$ that is given by $S(N)=\sum_{i,j} C_{i,j}\lambda_{i,j}^N$. Determining the $\lambda_{i,j}$ then requires fitting all the parameters $C_{i,j}$ and $\lambda_{i,j}$ simultaneously, and quickly becomes infeasible for a modestly large number of parameters. We see that while both our method and the method of \cite{cross2016scalable,brown2018randomized,francca2018approximate} involve simultaneously fitting multiple exponential decays, our method significantly reduces the number of parameters in each fit. For example, if $\phi\simeq 2\phi_1\oplus\phi_2\oplus\phi_3$, our method requires fitting three functions, corresponding to $\phi_1$, $\phi_2$, and $\phi_3$, where the first function is a sum of two exponential decays and the latter two functions are single exponential decays. In contrast, \cite{cross2016scalable,brown2018randomized,francca2018approximate} require fitting a single exponential function that is the sum of four exponential decays, one for each copy of each irrep. In addition, the method of \cite{cross2016scalable,brown2018randomized,francca2018approximate} cannot determine $F_\Lambda$, because it is not possible to match the observed parameters $\{\lambda_{i,j}\}$ to their corresponding $\mathcal{H}_i$ in order to use Eq. \ref{eq:fidelityEst}.

The remainder of this section is devoted to deriving this procedure, for groups that are not necessarily multiplicity-free. Much of this is a straightforward extension of the derivation of \cite{helsen2019new}, although the generalization to gate-dependent noise (Appendix \ref{appendix:gateDependent}) is much less straightforward.

\subsection{Deriving the decays}
\label{section:decayDerivation}
To derive the form of the character-weighted survival, Eq. \ref{eq:fitForm}, we will need two facts from representation theory.

\begin{fact}[Schur's Lemma]
    Let $\phi:G\rightarrow \mathcal{L}(V)$ be a representation of a group $G$ on a vector space $V$, which decomposes into irreducible representations as $\phi\simeq a_1\phi_1\oplus\cdots\oplus a_I\phi_I$, where $a_i\in\mathbb{Z}^+$ are positive integers. The corresponding decomposition of $V$ is $V\simeq \bigoplus_i \mathbbm{C}^{a_i}\otimes V_i$. In terms of this decomposition, any linear map $\hat{\eta}\in\mathcal{L}(V)$ satisfying $\hat\eta\phi(U)=\phi(U)\hat\eta$ for all $U\in G$ is of the form 
    \begin{equation}
        \hat\eta \simeq  \bigoplus_i \hat{Q}_i\otimes \hat{\mathbbm{1}}_i
    \end{equation}
    where $\hat Q_i$ is some $a_i\times a_i$ matrix for each $i$.
    \label{fact:Schur}
\end{fact}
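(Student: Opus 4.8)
The plan is to reduce the statement to the two standard forms of Schur's lemma and then reassemble them into the block structure of the commutant. First I would group the summands into isotypic components, writing $V \simeq \bigoplus_i W_i$ with $W_i = \mathbbm{C}^{a_i} \otimes V_i$, on which $\phi$ acts as $\mathbbm{1}_{a_i} \otimes \phi_i(U)$ (with $\mathbbm{1}_{a_i}$ the $a_i \times a_i$ identity). Relative to this decomposition any $\hat\eta \in \mathcal{L}(V)$ has a block form $\hat\eta = (\hat\eta_{ij})$ with $\hat\eta_{ij} : W_j \to W_i$, and the commutation hypothesis $\hat\eta\,\phi(U) = \phi(U)\,\hat\eta$ is equivalent to the blockwise intertwining relations $\hat\eta_{ij}\,(\mathbbm{1}_{a_j}\otimes\phi_j(U)) = (\mathbbm{1}_{a_i}\otimes\phi_i(U))\,\hat\eta_{ij}$ holding for every $U \in G$ and every pair $(i,j)$.

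For the off-diagonal blocks $i \neq j$, I would expand $\hat\eta_{ij}$ as an $a_i \times a_j$ array of linear maps $V_j \to V_i$; each such entry intertwines the inequivalent irreducibles $\phi_j$ and $\phi_i$, so the first form of Schur's lemma forces every entry to vanish and hence $\hat\eta_{ij} = 0$. For the diagonal blocks I would likewise write $\hat\eta_{ii}$ as an $a_i \times a_i$ array of maps $V_i \to V_i$, each of which commutes with the irreducible $\phi_i(U)$ for all $U$; working over $\mathbbm{C}$, the second form of Schur's lemma makes each such entry a scalar multiple of $\hat{\mathbbm{1}}_i$. Collecting these scalars into a single $a_i \times a_i$ matrix $\hat Q_i$ yields $\hat\eta_{ii} = \hat Q_i \otimes \hat{\mathbbm{1}}_i$, and summing over $i$ gives the claimed form $\hat\eta \simeq \bigoplus_i \hat Q_i \otimes \hat{\mathbbm{1}}_i$.

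The substantive content is carried entirely by the two forms of Schur's lemma for irreducibles; the rest is bookkeeping. The main obstacle is keeping the multiplicity index and the irrep index cleanly separated: the key observations are that the commutation condition decouples across isotypic blocks (which kills the off-diagonal blocks), and that within a block it constrains only the $V_i$ factor while leaving the $\mathbbm{C}^{a_i}$ factor entirely free. That freedom on the multiplicity factor is precisely what produces the tensor-product structure $\hat Q_i \otimes \hat{\mathbbm{1}}_i$ rather than an unconstrained operator. I would also flag that the algebraic closedness of $\mathbbm{C}$ is essential in the diagonal step, since it is what guarantees that a self-intertwiner of an irreducible is a genuine scalar multiple of the identity.
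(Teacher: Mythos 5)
Your proof is correct: splitting $V$ into isotypic components, killing the off-diagonal blocks with the first form of Schur's lemma (no nonzero intertwiners between inequivalent irreducibles), and reducing each diagonal block to scalars on the $V_i$ factor via the second form over $\mathbbm{C}$ is exactly the standard argument, and your remark that algebraic closedness (together with finite-dimensionality of the $V_i$, which supplies the eigenvalue) is what makes the diagonal step work is the right caveat. Note that the paper does not prove this fact at all---it states it as Fact~\ref{fact:Schur} and defers to \cite{fulton2013representation}---so there is nothing to compare against; your argument is precisely the textbook proof that the cited reference supplies.
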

\begin{fact}[Projection formula]
    Let $\phi$ and $V$ be as above. Given an irrep $\phi_i:G\rightarrow \mathcal{L}(V_i)$, define the {\bf character} $\chi_i:G\rightarrow \mathbbm{C}$ of $\phi_i$ as $\chi_i(U):=\Tr\left(\phi_i(U)\right)$. Then we can write the projector onto $\mathbbm{C}^{a_i}\otimes V_i$ as
    \begin{equation}
        \hat{P}_i=\frac{\text{dim} (V_i)}{|G|}\sum_{U\in G}\chi_i(U)^*\phi(U).
        \label{eq:Projection}
    \end{equation}
    \label{fact:Projection}
\end{fact}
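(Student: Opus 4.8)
The plan is to reduce the claim to two standard ingredients: Schur's lemma (Fact \ref{fact:Schur}) and the orthonormality of irreducible characters. Writing the natural decomposition as $\phi\simeq\bigoplus_k \hat{\mathbbm{1}}_{a_k}\otimes\phi_k$ acting on $V\simeq\bigoplus_k\mathbbm{C}^{a_k}\otimes V_k$, I first observe that $\hat{P}_i$ is a fixed linear combination of the operators $\phi(U)$, so it is automatically block-diagonal with respect to this decomposition and acts on the $k$-th block as $\hat{\mathbbm{1}}_{a_k}\otimes T_{ik}$, where I set $T_{ik}:=\frac{\dim V_i}{|G|}\sum_{U}\chi_i(U)^*\phi_k(U)\in\mathcal{L}(V_k)$. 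It therefore suffices to prove $T_{ik}=\delta_{ik}\hat{\mathbbm{1}}_{V_k}$, since this says $\hat{P}_i$ is the identity on the block $\mathbbm{C}^{a_i}\otimes V_i$ and zero on every other block, which is exactly the projector claimed.

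Next I would show that each $T_{ik}$ is an intertwiner of $\phi_k$. For any $V\in G$, conjugation gives $\phi_k(V)T_{ik}\phi_k(V)^{-1}=\frac{\dim V_i}{|G|}\sum_U\chi_i(U)^*\phi_k(VUV^{-1})$; reindexing the sum by $W=VUV^{-1}$ (a bijection of $G$) and using that the character $\chi_i$ is a class function, so that $\chi_i(V^{-1}WV)=\chi_i(W)$, shows $\phi_k(V)T_{ik}\phi_k(V)^{-1}=T_{ik}$. Hence $T_{ik}$ commutes with $\phi_k(V)$ for every $V\in G$. Because $\phi_k$ is irreducible, the irreducible case of Fact \ref{fact:Schur} forces $T_{ik}=c_{ik}\hat{\mathbbm{1}}_{V_k}$ for some scalar $c_{ik}$.

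To pin down $c_{ik}$ I would take the trace. On one hand $\Tr T_{ik}=c_{ik}\dim V_k$, and on the other hand $\Tr T_{ik}=\frac{\dim V_i}{|G|}\sum_U\chi_i(U)^*\chi_k(U)$. Invoking the orthonormality of irreducible characters, $\frac{1}{|G|}\sum_U\chi_i(U)^*\chi_k(U)=\delta_{ik}$, I obtain $c_{ik}\dim V_k=\dim V_i\,\delta_{ik}$, hence $c_{ik}=\delta_{ik}$ (the dimensions agree when $i=k$). This yields $T_{ik}=\delta_{ik}\hat{\mathbbm{1}}_{V_k}$ and completes the argument.

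The only content beyond bookkeeping is the character orthogonality relation; if a fully self-contained treatment is desired, I would derive it from the Schur orthogonality of matrix coefficients, which itself follows from applying Fact \ref{fact:Schur} to the group-averaged operators $\frac{1}{|G|}\sum_U\phi_i(U)\,E\,\phi_k(U)^{-1}$ for an arbitrary map $E$. The points requiring care are purely technical: the reindexing $W=VUV^{-1}$ must be recognized as a bijection of $G$, the irreps should be taken unitary so that the orthogonality relation holds in the stated form, and for a compact (rather than finite) $G$ every $\frac{1}{|G|}\sum_U$ is replaced by the normalized Haar integral, with complete reducibility and character orthogonality holding verbatim. I do not anticipate a genuine obstacle here, since this is the standard isotypic-projection formula specialized to the natural representation.
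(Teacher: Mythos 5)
Your proof is correct. The paper does not prove Fact~\ref{fact:Projection} inline---it defers to the standard reference \cite{fulton2013representation}---and your argument (block-diagonality of $\hat{P}_i$, the intertwiner property via reindexing the conjugated sum and the class-function property of $\chi_i$, Schur's lemma to force a scalar on each irreducible block, and trace plus character orthonormality to fix that scalar) is precisely the standard isotypic-projection proof found there, with all the delicate points (unitarity of irreps, the compact-group replacement of $\frac{1}{|G|}\sum_U$ by the Haar integral) correctly flagged.
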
\noindent
For proofs of both facts, see \cite{fulton2013representation}.

Given these results, we can prove the key property of $G$-twirls that allows us to compute the average fidelity.
\begin{theorem}[Form of $G$-twirls]
    If $G$ is any unitary group acting on $\mathcal{H}$, let $\phi\simeq a_1\phi_1\oplus\cdots\oplus a_I\phi_I$ be the decomposition of the natural representation into irreps, and let $\mathcal{H}\otimes\mathcal{H}\simeq \bigoplus_i \mathbbm{C}^{a_i}\otimes \mathcal{H}_i$ be the corresponding decomposition of $\mathcal{H}\otimes\mathcal{H}$. If $\Lambda$ is any quantum channel, the $G$-twirl of $\Lambda$ is of the form
    \begin{equation}
        \hat{\Lambda}_G \simeq \bigoplus_i \hat{Q}_i\otimes \hat{\mathbbm{1}}_i\;,
        \label{eq:twirlForm}
    \end{equation}
    where $Q_i$ is defined as in Fact. \ref{fact:Schur}.
    \label{thm:twirlForm}
\end{theorem}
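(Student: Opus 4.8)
The plan is to show that the $G$-twirl $\hat{\Lambda}_G$ intertwines the natural representation with itself---i.e.\ that it commutes with $\phi(V)$ for every $V\in G$---and then invoke Schur's Lemma (Fact~\ref{fact:Schur}) to conclude that it must have the claimed block form. The crucial first observation is that the operators $\hat{U}=U\otimes U^*$ appearing in the definition \ref{eq:twirl} of the twirl are exactly the images $\phi(U)$ of the natural representation, so that $\hat{\Lambda}_G=\tfrac{1}{|G|}\sum_{U\in G}\phi(U)^\dagger\hat{\Lambda}\,\phi(U)$. Nothing about $\Lambda$ being a genuine quantum channel will be used; the argument works for any fixed operator $\hat{\Lambda}$.

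First I would establish the commutation relation $\phi(V)\hat{\Lambda}_G=\hat{\Lambda}_G\,\phi(V)$ for all $V\in G$. Since $\phi$ is a unitary representation we have $\phi(U)^\dagger=\phi(U^{-1})=\phi(U)^{-1}$, so conjugating the twirl by $\phi(V)$ gives
$$\phi(V)^\dagger\,\hat{\Lambda}_G\,\phi(V)=\frac{1}{|G|}\sum_{U\in G}\phi(UV)^\dagger\,\hat{\Lambda}\,\phi(UV).$$
The reindexing $W=UV$ is a bijection of $G$ onto itself, because right multiplication by a fixed group element merely permutes the group. Relabeling the sum therefore recovers $\hat{\Lambda}_G$ verbatim, so $\phi(V)^\dagger\hat{\Lambda}_G\phi(V)=\hat{\Lambda}_G$, which is equivalent to the desired commutation relation.

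With the commutation relation in hand, the conclusion is immediate: applying Schur's Lemma (Fact~\ref{fact:Schur}) to the natural representation $\phi$ with the intertwiner $\hat{\eta}=\hat{\Lambda}_G$ yields precisely the decomposition of Eq.~\ref{eq:twirlForm}, with each $\hat{Q}_i$ an $a_i\times a_i$ matrix acting on the multiplicity space $\mathbbm{C}^{a_i}$. For a compact rather than finite $G$ I would replace the average $\tfrac{1}{|G|}\sum_{U\in G}$ by the Haar integral and use the left-invariance of the Haar measure in place of the finite reindexing step; the remainder of the argument is unchanged.

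I do not anticipate a serious obstacle here---the whole content of the theorem is that twirling manufactures an operator invariant under the group action, after which Schur's Lemma does all the structural work. The only points requiring genuine care are checking that $W=UV$ really does permute $G$ (so that the sum is unchanged) and, in the compact case, justifying the interchange of the integral with conjugation via invariance of the Haar measure. These are routine, so the main value of writing the proof out carefully is to make explicit that $\hat{U}=\phi(U)$ and hence that the twirl is a genuine representation-theoretic projection onto the commutant.
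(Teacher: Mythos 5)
Your proof is correct and follows essentially the same route as the paper: both establish that $\hat{\Lambda}_G$ commutes with every $\hat{U}=\phi(U)$ via a reindexing of the group average (the paper inserts $\hat{U}\hat{U}^\dagger$ and relabels the sum over $U'U$, which is the same bijection argument as your $W=UV$), and then invoke Schur's Lemma (Fact~\ref{fact:Schur}) to obtain Eq.~\ref{eq:twirlForm}. Your explicit remarks that $\hat{U}=\phi(U)$, that no channel property of $\Lambda$ is used, and that the compact case follows from Haar invariance are accurate and consistent with the paper's treatment.
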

\begin{proof}
    We apply Eq. \ref{eq:twirl} to observe that 
    \begin{align*}
        \hat{\Lambda}_G \hat{U} &= \frac{1}{|G|}\sum_{U'\in G} \hat{U}'^\dagger\hat{\Lambda}\hat{U}'\hat{U}\\
        &=\frac{1}{|G|}\sum_{U'\in G}\hat{U}\hat{U}^\dagger \hat{U}'^\dagger\hat{\Lambda}\hat{U'}\hat{U}\\
        &=\hat{U}\frac{1}{|G|}\sum_{(U'U)\in G}(\hat{U}'\hat{U})^\dagger\hat{\Lambda}(\hat{U}'\hat{U})=\hat{U}\hat{\Lambda}_G
    \end{align*}
    for any $U\in G$. We can then apply Fact \ref{fact:Schur}.
\end{proof}

We are now ready to derive the formula for the character-weighted survival probability $S_i(N)$. This proof follows the logic of \cite{helsen2019new}, adapted for non-multiplicity-free groups. \rev{Our notation assumes finite groups; for compact groups, one simply replaces the discrete average over the group with an integral over the Haar measure.} Writing out Eq. \ref{eq:survival} explicitly, including the effect of preparation and measurement errors $\Lambda_P$ and $\Lambda_M$, we have
\begin{widetext}
\begin{align*}
    {S}_i(N)&=\frac{1}{|G|^{N}|\overline{G}|}\sum_{U_0,...,U_N}\underbracket{\chi_{\overline i}^*(U_0)}_{\hat{P}_i}\langle\langle M_i|\hat{\Lambda}_M\hat\Lambda \hat{U}_{N+1}\hat\Lambda\hat{U}_N\cdots\hat\Lambda\hat{U}_2\hat\Lambda\hat{U}_1\underbracket{\hat{U}_0}_{\hat{P}_{\overline i}}\hat{\Lambda}_P|\rho_i\rangle\rangle.
\end{align*}
The sum over $U_0$ gives the projection $|\overline G|\hat{P}_{\overline i}/\text{dim}(\overline{\mathcal{H}}_{\overline i})$ according to Eq. \ref{eq:Projection}. To do the sum over $U_1,...,U_N$, we can define new group elements $D_1,...,D_N$ by $D_i=U_i\cdots U_1$. In terms of the $D_i$, we then have $U_i=D_iD_{i-1}^\dagger$, with the convention that $D_{N+1}=\mathbbm{1}$. Note that summing over $U_1,...,U_N$ is the same as summing over $D_1,...,D_N$. We therefore may write
\begin{align*}
    {S}_i(N)&=\frac{1}{\text{dim}(\overline{\mathcal{H}}_{\overline i})|G|^{N}}\sum_{D_1,...,D_N\in G}\langle\langle M_i|\hat{\Lambda}_M\hat\Lambda\underbrace{ \hat{D}_N^\dagger\hat\Lambda\hat{D}_N}_{\hat\Lambda_G}\cdots\underbrace{\hat{D}_2^\dagger\hat{\Lambda}\hat{D}_2}_{\hat\Lambda_G}\underbrace{\hat{D}_1^\dagger\hat\Lambda\hat{D}_1}_{\hat\Lambda_G}\hat{P}_{\overline i}\hat{\Lambda}_P|\rho_i\rangle\rangle.
\end{align*}
\end{widetext}
We can now easily perform the sum over the $D_i$, since each sum just gives a $G$-twirl according to Eq. \ref{eq:twirl}. Performing this sum, and using Thm. \ref{thm:twirlForm}, gives
\begin{align*}
    S_i(N)&=\frac{1}{\text{dim}(\mathcal{\overline H}_{\overline i})}\langle\langle M_i|\hat{\Lambda}_M\hat\Lambda \left(\hat{\Lambda}_G\right)^N\hat{P}_{\overline i}\hat{\Lambda}_P|\rho_i\rangle\rangle\\
    &=\frac{1}{\text{dim}(\mathcal{\overline H}_{\overline i})}\langle\langle M_i|\hat{\Lambda}_M\hat\Lambda \left(\bigoplus_{i'}\hat{Q}_{i'}\otimes \mathbbm{1}_{i'}\right)^N\hat{P}_{\overline i}\hat{\Lambda}_P|\rho_i\rangle\rangle\\
    &=\frac{1}{\text{dim}(\mathcal{\overline H}_{\overline i})}\langle\langle M_i|\hat{\Lambda}_M\hat\Lambda \left(\hat{Q}_{i}^N\otimes \mathbbm{1}_{i}\right)\hat{P}_{\overline i}\hat{\Lambda}_P|\rho_i\rangle\rangle\\
\end{align*}
where in the last line, we used the fact that the range of $\hat{P}_{\overline i}$ is included in $\mathbbm{C}^{a_i}\otimes\mathcal{H}_i$. We see that the effect of the character-weighting is to produce a projector that restricts our attention to a single $i$. If we diagonalize $\hat{Q}_i$ as $\hat{Q_i}=\sum_{j=1}^{a_i}|e_{i,j}\rangle\rangle \lambda_{i,j}\langle\langle \overline{e}_{i,j}|$ with $\langle\langle \overline{e}_{i,j}|{e}_{i,j'}\rangle\rangle=\delta_{j,j'}$, then $\hat{Q}_i^N=\sum_{j=1}^{a_i}|e_{i,j}\rangle\rangle \lambda_{i,j}^N\langle\langle \overline{e}_{i,j}|$, and we may write the final form of $S_i(N)$ as
\begin{equation*}
    S_i(N) = \sum_{j=1}^{a_i}\frac{\langle\langle M_i|\hat{\Lambda}_M\hat\Lambda \Big{(}|e_{i,j}\rangle\rangle\langle\langle \overline{e}_{i,j}|\otimes \mathbbm{1}_{i}\Big{)}\hat{P}_{\overline i}\hat{\Lambda}_P|\rho_i\rangle\rangle}{\text{dim}(\overline{\mathcal H}_{\overline i})}\lambda_{i,j}^N
\end{equation*}
which is precisely the form given in Eq. \ref{eq:fitForm}. Notice that the $\lambda_{i,j}$ depend only on the gate error $\Lambda$, and not the SPAM errors $\Lambda_P,\Lambda_M$ which are absorbed into the constant prefactor.

\subsection{Computing the fidelity}

Finally, we prove the fidelity can be estimated according to Eq. \ref{eq:fidelityEst}. This was first derived in \cite{francca2018approximate}, although we will adopt a simpler proof here using techniques introduced in \cite{nielsen2002simple,horodecki1999general}. The key realization is that \emph{both} the fidelity and the trace of a channel are invariant under twirling by an arbitrary group: $F_\Lambda=F_{\Lambda_G}$ and $\Tr(\hat\Lambda)=\Tr(\hat\Lambda_G)$ (see Eq. \ref{eq:twirl}). In particular, if we choose $G$ to be the full unitary group it is known that the full twirl of a channel is simply a depolarizing channel \cite{horodecki1999general,nielsen2002simple}\footnote{In our notation, this can be seen by noting that the natural representation of the full unitary group decomposes into two irreps which act on $|\mathbbm{1}\rangle\rangle$ and the orthogonal complement of $|\mathbbm{1}\rangle\rangle$, respectively, and then applying Fact \ref{fact:Schur}.}:
\begin{equation}
    \hat{\Lambda}_G:=\int dU\ \hat{U}^\dagger\hat{\Lambda}\hat U = p\mathbbm{1}+(1-p)\frac{1}{d}|\mathbbm{1}\rangle\rangle\langle\langle\mathbbm{1}|.
    \label{eq:twirlDepolarizing}
\end{equation}
In terms of the parameter $p$, we can directly compute $F_{\Lambda_G}=p+\frac{1-p}{d}$. Similarly, we can also directly compute $\Tr(\hat{\Lambda}_G)=pd^2+(1-p)$. Combining these equations gives
\begin{equation}
F_\Lambda = \frac{\Tr(\hat\Lambda)+d}{d^2+d}.
\label{eq:TraceFidelity}
\end{equation}
To complete the proof, we note that $\Tr(\hat{\Lambda})$ can be written in terms of the matrices $\hat{Q}_i$ in Eq. \ref{eq:twirlForm} as $$\Tr(\hat{\Lambda})=\sum_{i=1}^I\left[\text{dim}(\mathcal{H}_i)\Tr(\hat{Q}_i)\right]=\sum_{i=1}^I\left[\text{dim}(\mathcal{H}_i)\sum_{j=1}^{a_i}\lambda_{i,j}\right]$$
which, combined with Eq. \ref{eq:TraceFidelity}, gives Eq. \ref{eq:fidelityEst} as desired.

\subsection{Scaling and Feasibility}
\label{section:scaling}
We note that experimentally determining $S_i(N)$ requires Monte Carlo sampling of $U_0,U_1,...,U_N$. Each term in this sample is bounded by $\max_{U_0\in \overline G}(|\chi_{\overline i}(U_0)|)=\text{dim}({\mathcal{H}}_{\overline i})$. Therefore, the standard deviation of the samples is bounded by $\text{dim}({\mathcal{H}}_{\overline i})$, and the sample mean has uncertainty bounded by $\text{dim}({\mathcal{H}}_{\overline i})/\sqrt{\text{no. samples}}$. To determine the \emph{relative} uncertainty, we consider $S_i(N)\approx \sum_{j=1}^{a_i}C_{i,j}$ which is given by
\begin{align*}
\sum_{j=1}^{a_i}C_{i,j} &= \sum_{j=1}^{a_i}\frac{\langle\langle M_i|\hat{\Lambda}_M\hat\Lambda \Big{(}|e_{i,j}\rangle\rangle\langle\langle \overline{e}_{i,j}|\otimes \mathbbm{1}_{i}\Big{)}\hat{P}_{\overline i}\hat{\Lambda}_P|\rho_i\rangle\rangle}{\text{dim}(\mathcal{\overline H}_{\overline i})} \\
&\approx\frac{\langle\langle M_i|\hat{P}_{\overline i}|\rho_i\rangle\rangle}{\text{dim}(\mathcal{\overline H}_{\overline i})}
\end{align*}
where we've approximated $\Lambda,\Lambda_M,\Lambda_P\approx \mathbbm{1}$. The relative uncertainty in $S_i(N)$ is therefore bounded by
$$
\frac{\sigma_i}{|S_i(N)|}\lesssim\frac{\text{dim}(\overline{\mathcal{H}}_{\overline i})^2}{|\langle\langle M_i|\hat{P}_{\overline i}|\rho_i\rangle\rangle|\sqrt{\text{no. samples}}}
$$
We see that to efficiently benchmarking a group $G$, we must have $I$, $a_i$, and $\dim(\overline{\mathcal{H}}_{\overline i})$ all small. $I$ must be small so that we only need to estimate a small number of character-weighted survival probabilities $S_i(N)$, $a_i$ must be small so that we may fit a function with a small number of parameters, and $\dim(\overline{\mathcal{H}}_{\overline i})$ must be small for our Monte Carlo estimation of $S_i(N)$ to converge quickly. Note that for any $G$ the natural representation satisfies $\sum_{i=1}^I a_i\dim(\mathcal{H}_i)=4^{n}$ where $n$ is the number of qubits, so that choosing $\overline G = G$ will not suffice if the number of qubits is large. In particular, to {\bf scalably} benchmark a group, we must choose $G$ so that the number of irreps $I$ grows slowly with $n$, the multiplicity $a_i$ of each irrep is bounded by a small constant, and $\overline G$ has corresponding irreps $\overline{\mathcal{H}}_{\overline i}$ whose dimension grows slowly with $n$. These scaling considerations are similar to those discussed in \cite{helsen2019new} for multiplicity-free RB, except in our case we allow $a_i$ to be bounded rather than strictly $1$.

Note that the optimal $|\rho_i\rangle\rangle$ with largest $|\langle\langle M_i|\hat{P}_{\overline i}|\rho_i\rangle\rangle|$ is necessarily a pure state, since any mixed state $|\rho_i\rangle\rangle=\sum_\gamma p_\gamma |\psi_\gamma\rangle\rangle$ has
$$
|\langle\langle M_i|\hat{P}_{\overline i}|\rho_i\rangle\rangle|\leq \sum_\gamma p_\gamma|\langle\langle M_i|\hat{P}_{\overline i}|\psi_\gamma\rangle\rangle|\leq \max_\gamma |\langle\langle M_i|\hat{P}_{\overline i}|\psi_\gamma\rangle\rangle|.
$$
Ref. \cite{helsen2019new} considered the case of mixed initial states, and included a protocol for sampling from a mixed state $|\rho_i\rangle\rangle=\sum_\gamma p_\gamma|\psi_\gamma\rangle\rangle$ provided one can efficiently prepare the states $\{|\psi_\gamma\rangle\rangle\}$. However, we see that it suffices to take the initial state to be one of the efficiently preparable $|\psi_\gamma\rangle\rangle$, which simplifies initial state preparation.

Our scaling estimates are based on the typical case; however, there are a few worst-case failure modes. First, the noise may have some symmetry that restricts $\langle\langle \overline{e}_{i,j}|\hat{P}_{\bar i}\approx 0$ for some $(i,j)$. In this case, the corresponding $\lambda_{i,j}$ will not be accurately estimated by the fitting function. To remedy this, one may choose a set of projectors $\left\{\hat{P}_{\overline i,1},...,\hat{P}_{\overline i,k}\right\}$ such that each $\langle\langle \overline{e}_{i,j}|$ has overlap with at least one $\hat{P}_{\overline i,\alpha}$. This requires at most $a_i$ projectors. We can then define
$$\hat{P}_{\overline i}=\sum_\alpha\hat{P}_{\overline i, \alpha}\quad \chi_{\overline i}=\sum_\alpha\chi_{\overline i, \alpha}.$$
The modified character-weighted survival probability will require taking additional data to achieve the same relative uncertainty, since the corresponding $\dim(\overline{\mathcal{H}}_{\overline i})=\sum_\alpha \dim(\overline{\mathcal{H}}_{\overline i,\alpha})$ will be larger, but is otherwise identical.

The fitting procedure may also have difficulty fitting multiple exponential decays \rev{\cite{bromage1983quantification,clayden1992multiexponential}}, especially if the decay rates are similar \rev{\cite{clayden1992multiexponential}}. In the case of similar decays, the fit might have numerous local minima; worse, the fitting function might simply set the coefficient of one of the decays to zero and the corresponding decay rate to some arbitrary value, and fit the curve using fewer exponential decays. This can be detected during the fitting procedure, and corrected by either taking more data to more closely constrain the fit or by simply fitting fewer exponential decays. \rev{For a detailed discussion of methods used to fit multiexponential decays and their failure modes, we refer to \cite{istratov1999exponential,holmstrom2002review,hokanson2013numerically}.}

\section{Application: Subspace randomized benchmarking}

\label{section:subspace}

As an application of the general character RB method, we can improve on the recently introduced subspace randomized benchmarking method \cite{baldwin2020subspace}. Subspace RB characterizes the error associated with a group of gates $G$ that preserve a subspace of the Hilbert space. In \cite{baldwin2020subspace}, a benchmarking procedure is introduced that yields two decay parameters that are functions of the noise channel, but the procedure does not give an estimate for the average fidelity or other quantities with simple physical interpretations. The multiplicity-free character RB of \cite{helsen2019new} is not directly applicable to this situation, as we will see that any group that preserves subspaces necessarily decomposes into irreps with multiplicity. However, using our method we can easily characterize the average fidelity of such gates.

To simplify our discussion, we will focus on the particular case discussed in \cite{baldwin2020subspace}. The system considered in \cite{baldwin2020subspace} can implement arbitrary symmetric single qubit gates $U_1:=U\otimes U$ as well as the two-qubit entangling gate $U_{ZZ}:=\exp\{-i\frac{\pi}{4}Z\otimes Z\}$. The symmetric single qubit gates have negligible error compared to the entangling gate, so the goal of the experiment is to characterize the fidelity of $U_{ZZ}$. This is accomplished by combining the elementary gates into elements of a benchmarking group $G$, using a fixed number of the relevant gate $U_{ZZ}$, and then designing an RB procedure to benchmark elements of $G$. It is straightforward to see that any $U\in G$ made up of products of $U_1$ and $U_{ZZ}$ operators preserves the triplet and singlet subspaces
\begin{align*}
\mathcal{H}_T &:= \text{span}\left\{|00\rangle,\frac{|01\rangle+|10\rangle}{\sqrt{2}},|11\rangle\right\}\\
\mathcal{H}_S &:= \text{span}\left\{\frac{|01\rangle-|10\rangle}{\sqrt{2}}\right\}.
\end{align*}
This implies that every gate $U\in G$ decomposes as $U=U_T\oplus U_S$, with $U_T$ and $U_S$ acting on the triplet and singlet spaces, respectively.

Our method differs from the original in several ways. Most notably, we combine the elementary gates into elements $U\in G$ such that $G$ forms a group. This requires a moderate increase in complexity of the combined gates; \cite{baldwin2020subspace} combined their gates into unitaries involving three $U_{ZZ}$ gates, while our construction requires four. However, in return for this increased complexity, our method offers several advantages. Rather than estimate decay parameters with no clear physical interpretation, our method produces direct estimates of the average fidelity. In addition, the derivation of the form of the exponential decays in \cite{baldwin2020subspace} required assumptions on the relative phases of $U_T$ and $U_S$ that could not actually be realized on their experimental platform. In contrast, our method yields rigorous decays thanks to the underlying group structure of $G$.

The original subspace RB can be extended to sets of gates $G$ that preserve some arbitrary splitting of $\mathcal{H}$ into subspaces $\mathcal{H}=\mathcal{H}_1\oplus\mathcal{H}_2$ provided the set $G$ can be written as
\rev{$$
G=\{U_{1,b_1}\oplus \sigma U_{2,b_2}: \sigma=\pm,\ (b_1,b_2)\in B_1\times B_2\}
$$}
where \rev{$G_1:=\{U_{1,b_1}:b_1\in B_1\}$} and \rev{$G_2:=\{U_{2,b_2}:b_2\in B_2\}$} are groups and unitary 2-designs \footnote{Ref. \cite{baldwin2020subspace} claimed it was sufficient to require \rev{$G_2$} to be a unitary $1$-design, but this appears to be an error. A similar error was made in \cite{wood2018quantification}, from which much of \cite{baldwin2020subspace} is derived.} (see below for the definition of a $2$-design) \rev{acting on $\mathcal{H}_1$ and $\mathcal{H}_2$ respectively (here, $B_1$ and $B_2$ are just index sets for the groups $G_1$ and $G_2$)}. However, it is difficult to construct such a \rev{$G$} in a way that is experimentally relevant; indeed, \cite{baldwin2020subspace} could not do this for the simple case of two qubits, and we avoid attempting such a construction here. A more useful approach, which mirrors our approach below, is to construct an arbitrary group out of the elementary gates and perform character RB on whatever irreps result. This method can likely be used to benchmark other two-qubit gates that are symmetric under SWAP besides $U_{ZZ}$, and may also prove useful for gates that preserve other subspaces.

\subsection{Constructing the benchmarking group}

Ref. \cite{baldwin2020subspace} constructed their benchmarking set $G$ using a generalization of the Clifford group \cite{gottesman1998heisenberg,gottesman1998heisenberg,aaronson2004improved} to a $d$-level system \cite{hostens2005stabilizer}. We will follow a similar procedure, modified to ensure $G$ forms a group. For a $d$-level system, analogues of the $X$ and $Z$ qubit operators are defined as \cite{gottesman1998fault}:
$$
X|z\rangle = |z+1\rangle \qquad Z|z\rangle = \omega^z|z\rangle
$$
where $\omega := e^{\frac{2\pi i}{d}}$ and addition is performed modulo $d$. These generalized $X$ and $Z$ operators are unitary \rev{but not Hermitian}, and the set $\{X^aZ^b:a,b\in\mathbbm{Z}_d\}$ forms a (complex) orthogonal basis for the set of all $d\times d$ matrices. Note that for $d=2$ we recover the usual Pauli matrices.

Specializing to $d=3$, define the generalized Pauli group as $\mathcal{P}:=\{\omega^\eta X^a Z^b:\eta,a,b\in\mathbbm{Z}_d\}$. The fact that $\mathcal{P}$ is a group follows from the commutation relation $ZX = \omega XZ$. The generalized Clifford group is defined to be the set of all unitaries that stabilize $\mathcal{P}$ \cite{hostens2005stabilizer}:
$$
G_T = \{U :U\mathcal{P} U^\dagger = \mathcal{P}\}.
$$
An element $U\in G_T$ is defined (up to a global phase) by its action on $X$ and $Z$. Defining $UXU^\dagger=\omega^{\eta_x}X^{a_x}Z^{b_x}$ and $UZU^\dagger=\omega^{\eta_z}X^{a_z}Z^{b_z}$, and noting
\begin{align*}
ZX&=\omega XZ\\
UZU^\dagger UXU^\dagger&=\omega UXU^\dagger UZU^\dagger\\
\omega^{\eta_x+\eta_z}X^{a_z}Z^{b_z}X^{a_x}Z^{b_x}&=\omega^{1+\eta_x+\eta_z}X^{a_x}Z^{b_x}X^{a_z}Z^{b_z}\\
\omega^{a_xb_z}X^{a_x+a_z}Z^{b_x+b_z}&=\omega^{1+a_zb_x}X^{a_x+a_z}Z^{b_x+b_z}
\end{align*}
we see that we must have $a_xb_z-a_zb_x=_3 1$, where $=_3$ denotes equality mod $3$. This is the only restriction on $\eta_x,\eta_z,a_x,a_z,b_x,b_z$ \cite{hostens2005stabilizer}, leading to a total of 216 elements of $G_T$. We can find the action of $U\in G_T$ on a general element $X^aZ^b$ by
\begin{align*}
\begin{split}
    U X^aZ^b U^\dagger = &\ (U XU^\dagger)^a(UZU^\dagger)^b\\
    = &\ \omega^{P} X^{aa_x+ba_z}Z^{ab_x+bb_z}
\end{split}
\end{align*}
where
$$P:=\eta_xa+\eta_zb+2(a^2-a)a_xb_x+2(b^2-b)a_zb_z+abb_xa_z.$$
The action of $U$ on a general density matrix then follows by linearity.

Our benchmarking group $G$ is constructed by combining the elementary symmetric gates to act as $G_T$ on the triplet subspace, where the three levels $|0\rangle,|1\rangle,|2\rangle$ correspond to the triplet basis $|00\rangle,\frac{|01\rangle+|10\rangle}{\sqrt 2},|11\rangle$. The most general composite gate is formed by alternatively applying $U_1$ and $U_{ZZ}$ gates to our qubits. A straightforward calculation shows that if such a circuit applies an operator $U_T$ to the triplet subspace, its action on the singlet subspace is necessarily given by $(-1)^{n_z}\omega^\eta\det(U_T)^{1/3}$, where $n_z$ is the number of entangling $U_{ZZ}$ gates. By varying the single-qubit unitaries $U_1$, we find computationally that all elements of $G_T$ and all relative phases $\omega^\eta$ can be generated by circuits of exactly four $U_{ZZ}$ gates, as shown in Fig. \ref{fig:SubspaceCircuit} \footnote{Ref. \cite{baldwin2020subspace} required a shorter circuit of only three entangling gates. However, this circuit cannot implement all relative phases between the subspaces and thus does not result in a group.}. In total, then, the benchmarking group is given by
$$
G:=\{U_T\oplus \omega^\eta \det(U_T)^{1/3} : U_T\in G_T, \eta=0,1,2\}
$$
where the first summand acts on the triplet subspace and the second acts on the singlet subspace. Note that every group element contains exactly four entangling gates, so the average fidelity of $G$ gives a useful measure of the fidelity of the entangling gate.

\begin{figure}
    \centering
    \includegraphics[width=\columnwidth]{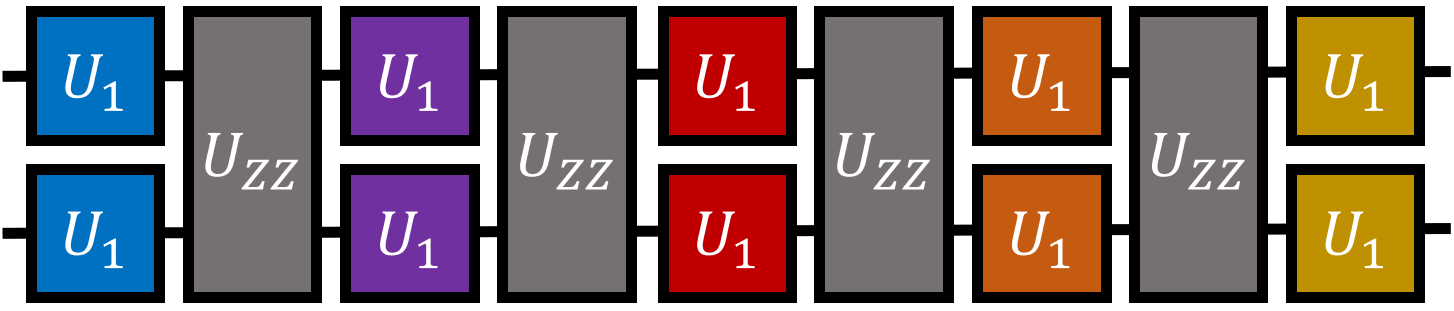}
    \caption{The elements of the benchmarking group $G$ are constructed by composing elementary gates as shown above to implement elements of $G_T$ on the triplet subspace. Each group element contains exactly four entangling gates.}
    \label{fig:SubspaceCircuit}
\end{figure}

\renewcommand{\arraystretch}{1.5}
\begin{table}
    \centering
    \begin{tabular}{c|c|c}
        Subrep & Projector & $\chi_i(U_T\oplus U_S)$ \\\hline\hline
        $\mathcal{H}_{T0}$  & $\hat{P}_{T0}=\frac{1}{3}|\mathbbm{1}_T\rangle\rangle\langle\langle\mathbbm{1}_T|$&\multirow{2}{*}{$1$} \\\cline{1-2}
        $\mathcal{H}_{S0}$ & $\hat{P}_{S0}=|\mathbbm{1}_S\rangle\rangle\langle\langle\mathbbm{1}_S|$& \\\hline
        $\mathcal{H}_{T\perp}$ & $\hat{P}_{T\perp}=\mathbbm{1}_T-\hat{P}_{T0}$ & $|\Tr(U_T)|^2-1$ \\\hline
        $\mathcal{H}_{TS}$ & $\hat{P}_{TS}=\text{Projector onto }\mathcal{H}_{T}\otimes \mathcal{H}_{S}$ & $\Tr(U_T)\Tr(U_S)^*$ \\\hline
        $\mathcal{H}_{ST}$ & $\hat{P}_{ST}=\text{Projector onto }\mathcal{H}_{S}\otimes \mathcal{H}_{T}$ &  $\Tr(U_T)^*\Tr(U_S)$ \\\hline
    \end{tabular}
    \caption{Subrepresentations of the standard representation for groups that preserve the triplet and singlet subspaces, and their corresponding projectors and characters.}
    \label{table:characters}
\end{table}

\subsection{Irreps of the benchmarking group}
\label{section:irrepsSubspace}
For $G$ given above, the natural representation decomposes into the irreps $\mathcal{H}_{T0}$, $\mathcal{H}_{S0}$, $\mathcal{H}_{T\perp}$, $\mathcal{H}_{TS}$, and $\mathcal{H}_{ST}$, which are described in Table \ref{table:characters}. These are all clearly subrepresentations of the natural representation; for proof that they are in fact irreducible, we will use the concept of a {\bf unitary $\mathbf{t}$-design} \cite{dankert2009exact}.

Let $S$ be a set of unitaries acting on a space $\mathcal H$. A {\bf balanced polynomial} of degree $t$ is a polynomial in the matrix elements of $U$ and $U^*$ where each term in the polynomial has degree $d<t$ in the elements of $U$ and degree $d$ in the elements of $U^*$. $S$ is a unitary $t$-design if for balanced polynomial  $p(U,U^*)$ of degree $t$, averaging $p(U,U^*)$ over $S$ is the same as averaging over all unitaries on $\mathcal H$ (weighted by the Haar measure)
$$
\frac{1}{|S|}\sum_{U\in S} p(U,U^*)=\int dU\ p(U,U^*).
$$
A classic example is the Clifford group, which forms a unitary $3$-design \cite{dankert2009exact,webb2016clifford,zhu2017multiqubit}.

The group $G_T$ forms a unitary 2-design \cite{chau2005unconditionally} (see Appendix \ref{appendix:WeylGroup} for a proof). This allows us to prove the representations in Table \ref{table:characters} are irreducible, using the following fact:
\begin{fact}[Schur normalization]
Let $\chi$ be the character of a representation. The representation is irreducible iff $$\frac{1}{|G|}\sum_{U\in G}|\chi(U)|^2=1.$$\label{fact:magnitudeCharacter}
\end{fact}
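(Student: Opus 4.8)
The plan is to reduce the claim to the orthonormality of irreducible characters, which can be extracted from the two facts already in hand. First I would decompose the representation carrying $\chi$ into irreducibles, $\phi \simeq \bigoplus_i a_i \phi_i$ with each $a_i$ a non-negative integer, so that by additivity of the trace $\chi = \sum_i a_i \chi_i$. Writing the Hermitian inner product $\langle f, g\rangle := \frac{1}{|G|}\sum_{U\in G} f(U)^* g(U)$, the entire computation then comes down to evaluating $\langle \chi,\chi\rangle$ in terms of the multiplicities $a_i$.

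The key input is the orthonormality relation $\langle \chi_i,\chi_j\rangle = \delta_{ij}$ for irreducible characters. Rather than quote this, I would extract it directly from Fact \ref{fact:Projection}: taking the trace of both sides of the projector identity $\hat P_i = \frac{\dim(V_i)}{|G|}\sum_U \chi_i(U)^* \phi(U)$ applied to an arbitrary representation $\phi$ gives
$$\frac{1}{|G|}\sum_{U\in G} \chi_i(U)^* \chi(U) = a_i,$$
since $\Tr(\hat P_i) = a_i\dim(V_i)$ and the factor $\dim(V_i)$ cancels. That is, the inner product of $\chi_i$ against any character returns the multiplicity of $\phi_i$. Specializing $\phi = \phi_j$ (whose only nonzero multiplicity is the coefficient of $\phi_i$ in $\phi_j$, namely $\delta_{ij}$) yields orthonormality. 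Since Fact \ref{fact:Schur} is precisely what underlies the projection formula, this keeps the argument self-contained within the results already assumed in the excerpt.

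With orthonormality established, expanding bilinearly gives
$$\frac{1}{|G|}\sum_{U\in G} |\chi(U)|^2 = \Big\langle \sum_i a_i \chi_i,\ \sum_j a_j \chi_j\Big\rangle = \sum_{i,j} a_i a_j\, \delta_{ij} = \sum_i a_i^2 .$$
The final step is purely arithmetic: because the $a_i$ are non-negative integers, $\sum_i a_i^2 = 1$ holds if and only if exactly one $a_i$ equals $1$ and all the others vanish, which is exactly the statement that $\phi$ is irreducible. This establishes the equivalence.

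The main obstacle, such as it is, lies entirely in how much of the character-orthonormality machinery to re-derive versus cite; all the genuine representation-theoretic content is packed into the relation $\langle \chi_i,\chi_j\rangle = \delta_{ij}$, and once it is in place the rest is bookkeeping. The one point I would flag explicitly is that the equivalence genuinely uses the integrality and non-negativity of the $a_i$ — a sum of squares of non-negative integers equals $1$ only in the irreducible case — so I would not leave that implication implicit.
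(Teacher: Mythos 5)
Your proof is correct, and it is essentially the standard argument that the paper itself defers to by citing \cite{fulton2013representation} rather than proving the fact: expand $\chi=\sum_i a_i\chi_i$, use orthonormality of irreducible characters to obtain $\frac{1}{|G|}\sum_{U\in G}|\chi(U)|^2=\sum_i a_i^2$, and observe that a sum of squares of non-negative integers equals $1$ iff exactly one $a_i$ equals $1$. Your derivation of the orthonormality relation by taking the trace of both sides of the projection formula (Fact \ref{fact:Projection}), so that $\Tr(\hat P_i)=a_i\dim(V_i)$ yields $\frac{1}{|G|}\sum_{U\in G}\chi_i(U)^*\chi(U)=a_i$, is a legitimate and self-contained route that stays within the facts the paper already assumes, and you are right to flag that the final step genuinely uses integrality and non-negativity of the multiplicities.
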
\noindent
For a proof, see \cite{fulton2013representation}.

The representations $\mathcal{H}_{T 0}$ and $\mathcal{H}_{S 0}$ are 1D, thus irreducible. For the representation $\mathcal{H}_{T\perp}$, we have
\begin{align*}
\frac{1}{|G|}\sum_{U\in G}|\chi_{T\perp}(U)|^2&=\frac{1}{3|G_T|}\sum_{\substack{U_T\in G_T\\\eta=0,1,2}}|\chi_{T\perp}(U_T)|^2\\
&=\frac{1}{|G_T|}\sum_{G_T}\left(|\Tr(U_T)|^2-1\right)^2\\
&=\int dU_\alpha\ \left(|\Tr(U_\alpha)|^2-1\right)^2 \\
&= 1
\end{align*}
where the second equality follows from the unitary 2-design property, and the third follows from the fact that $\mathcal{H}_{T\perp}$ is an irrep of the natural representation of the full unitary group on $\mathcal{H}_T$. Finally, for $\mathcal{H}_{TS}$ and $\mathcal{H}_{ST}$ we have
\begin{align*}
    \frac{1}{|G|}\sum_{U\in G}|\chi_{ST}(U)|^2&=\frac{1}{3|G_T|}\sum_{\substack{U_T\in G_T\\\eta=0,1,2}}|\Tr(U_T)|^2\\
    &=\int dU_T\ |\Tr(U_T)|^2\\
    &=1
\end{align*}
where the second equality follows from the unitary 2-design property and the third follows from the fact that the direct representation of the full unitary group on $\mathcal{H}_T$ is irreducible.

Note that $\mathcal{H}_{T0}$ and $\mathcal{H}_{S0}$ are two irreducible copies of the trivial representation, so that $G$ is necessarily non-multiplicity-free \footnote{It follows that $G$ also cannot form a 2-design, as 2-designs are always multiplicity free; in particular, the natural representation of a 2-design decomposes into precisely two non-isomorphic irreps, acting on $|\mathbbm{1}\rangle\rangle$ and the orthogonal complement of $|\mathbbm{1}\rangle\rangle$ \cite{dankert2009exact,helsen2019new}.}. The remaining irreps are all unique, since they have different character functions. 

\subsection{Benchmarking $G$}

The form of the decay curves corresponding to each irrep is given by
\begin{equation}
\begin{aligned}
    S_0(N)&= C_0\lambda_0^N+B\\
    S_{TS}(N)&= C_{TS}\lambda_{TS}^N\\
    S_{ST}(N)&= C_{ST}\lambda_{ST}^N\\
    S_{T\perp}(N)&= C_{T\perp}\lambda_{TS}^N.
    \label{eq:fitSubspace}
\end{aligned}
\end{equation}
Note that from our general form Eq. \ref{eq:fitForm} we would expect that $S_0(N)$ is the sum of two exponential terms, with each $\lambda_{0,j}$ corresponding to an eigenvalue of $\hat{\Lambda}_G$ restricted to $\mathcal{H}_0$. However, we know that for trace-preserving noise $\langle\langle\mathbbm{1}|\hat{\Lambda}_G=\langle\langle\mathbbm{1}|$, which implies that one of the eigenvalues is $1$. 

We define two different subgroups $\overline{G}_1,\overline{G}_2\subseteq G$ for our benchmarking procedure. We will use $\overline{G}_1$ to construct $S_0(N)$ and $S_{T\perp}(N)$, and $\overline{G}_2$ to construct $S_{TS}(N)$ and $S_{ST}(N)$. We define 
\begin{align*}
\overline{G}_1&:=\{X^aZ^b\oplus \omega^\eta : a,b,\eta = 0,1,2\}\\ \overline{G}_2&:=\{Z^b\oplus \omega^\eta : b,\eta = 0,1,2\}. 
\end{align*}

For $\overline{G}_1$, we can define the following character functions and their corresponding projectors:
\begin{align*}
    \chi_{\overline 0}(X^aZ^b\oplus\omega^\eta) &=1\\
    \hat{P}_{\overline 0}&= \frac{1}{3}|\mathbbm{1}_T\rangle\rangle\langle\langle\mathbbm{1}_T|+|\mathbbm{1}_S\rangle\rangle\langle\langle\mathbbm{1}_S|\\
    \chi_{\overline{T\perp}}(X^aZ^b\oplus\omega^\eta) &=\omega^{-a}\\
    \hat{P}_{\overline {T\perp}}&= \frac{1}{3}|Z\rangle\rangle\langle\langle Z|
\end{align*}
We see that $\hat{P}_{\overline 0}$ projects into $2\overline{\mathcal{H}}_{\overline 0}\subseteq 2\mathcal{H}_0$ and $\hat{P}_{\overline{T\perp}}$ projects into $\overline{\mathcal{H}}_{\overline{T\perp}}\subseteq \mathcal{H}_{T\perp}$, as required. We also see that $\text{dim}(\overline{\mathcal{H}}_{\overline{T\perp}})=1$, so that $S_{T\perp}(N)$ will have the best possible relative error (see Section \ref{section:scaling}).

For $\overline{G}_2$, we can define the character functions and corresponding projectors
\begin{align*}
    \chi_{\overline{TS}}(Z^b\oplus\omega^\eta) &=\omega^{b-\eta}\\
    \hat{P}_{\overline {TS}}&=|T\rangle|S\rangle\langle T|\langle S|\\
    \chi_{\overline{ST}}(Z^b\oplus\omega^\eta) &=\omega^{-b+\eta}\\
    \hat{P}_{\overline {ST}}&=|S\rangle|T\rangle\langle S|\langle T|
\end{align*}
where $|T\rangle:=(|01\rangle+|10\rangle)/\sqrt{2}$ is the triplet state satisfying $Z|T\rangle=\omega|T\rangle$ and $|S\rangle:=(|01\rangle-|10\rangle)/\sqrt{2}$ is the singlet state. We again see that $P_{\overline{TS}}$ projects into $\overline{\mathcal{H}}_{\overline{TS}}\subseteq\mathcal{H}_{TS}$ and  $\text{dim}(\overline{\mathcal{H}}_{\overline{TS}})=1$, so that $S_{TS}(N)$ will also have the best possible relative error.

As our initial states, we choose
$$
|\rho_i\rangle\rangle = \left\{\begin{array}{ll}
|00\rangle\rangle, &i=0,T\! \! \perp \\
|01\rangle\rangle, &i=TS,ST\end{array}\right.
$$
Here, we've restricted ourselves to initial states that are a mixture of $Z$-basis product states, for ease of preparation.

As our measurement projectors, we choose
$$
|M_i\rangle\rangle = \left\{\begin{array}{ll}
|00\rangle\rangle+|11\rangle\rangle, &i=0,T\! 
\! \perp\\
|01\rangle\rangle, &i=TS,ST\\\end{array}\right.
$$
Here, we've restricted our measurement projectors to correspond to $Z$ measurements, for ease of measuring.

With these choices, the $S_i(N)$ are approximately
\begin{align*}
    S_i(N) & \approx \frac{\langle\langle M_i|\hat{P}_{\overline i}|\rho_i\rangle\rangle}{\text{dim}(\mathcal{\overline H}_{\overline i})}=\left\{\begin{array}{ll}
\frac{2}{3} , &{i}={0}\\
\frac{e^{-i\pi/3}}{3} , &{i}=T\! 
\! \perp\\
\frac{1}{4} , &{i}=TS,ST\\\end{array}\right.
\end{align*}

Note that $\lambda_{ST}=\lambda_{TS}^*$, so it is unnecessary to compute both $S_{TS}(N)$ and $S_{ST}(N)$. Note also that $\lambda_0$ and $\lambda_{T\perp}$ are both necessarily real, as are $C_0$ and $B$. The remaining parameters are complex. For convenience, we will rotate $S_{T\perp}(N)$ by $e^{i\pi/3}$ so that $S_{T\perp}(N)$ is approximately real.

\begin{figure}
    \centering
    \includegraphics[width=\columnwidth]{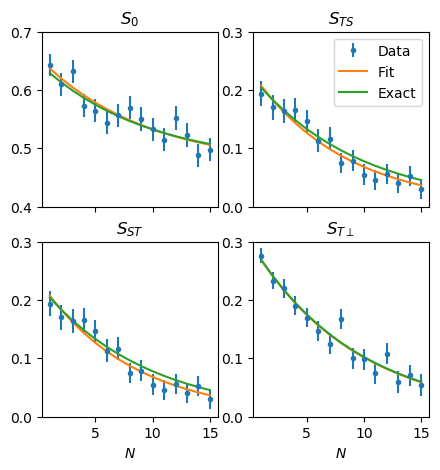}
    \caption{The predicted and measured character-weighted survival probability for a random error channel. The exact decay (green) is an exponential decay given by Eq. \ref{eq:fitSubspace}. We estimate $S_i(N)$ by applying random gates and measuring the final state (blue points). The data is fit to an appropriate function (orange) from which we estimate the fidelity.}. 
    \label{fig:IndividualDecays}
\end{figure}
\begin{figure}
    \centering
    \includegraphics[width=\columnwidth]{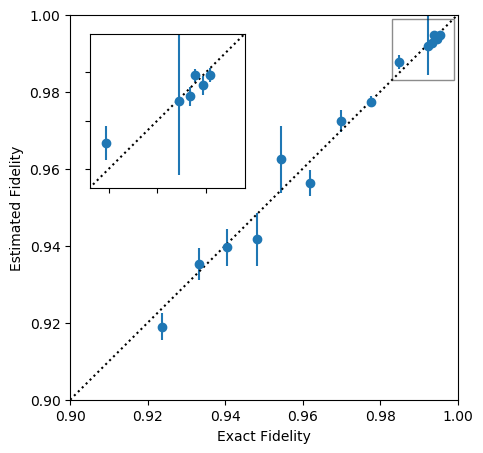}
    \caption{The exact and estimated fidelity for a selection of randomly generated error channels. Each estimate was based on data taken over 15 different lengths $N$. Each estimate was arrived at by applying a total of 150,000 benchmarking group elements. This is the same number of elements applied in the experiment described in \cite{baldwin2020subspace}. The diagonal line denotes the points where the exact and estimated fidelities are equal. The data agree with the line with a reduced $\chi^2$ value of $.9$, indicating good agreement. Note that the error bars are derived from statistical uncertainty in the data, and vanish in the limit of an infinite number of data points}
    \label{fig:FidelityEstimator}
\end{figure}

\begin{figure*}
    \centering
    \includegraphics[width=\textwidth]{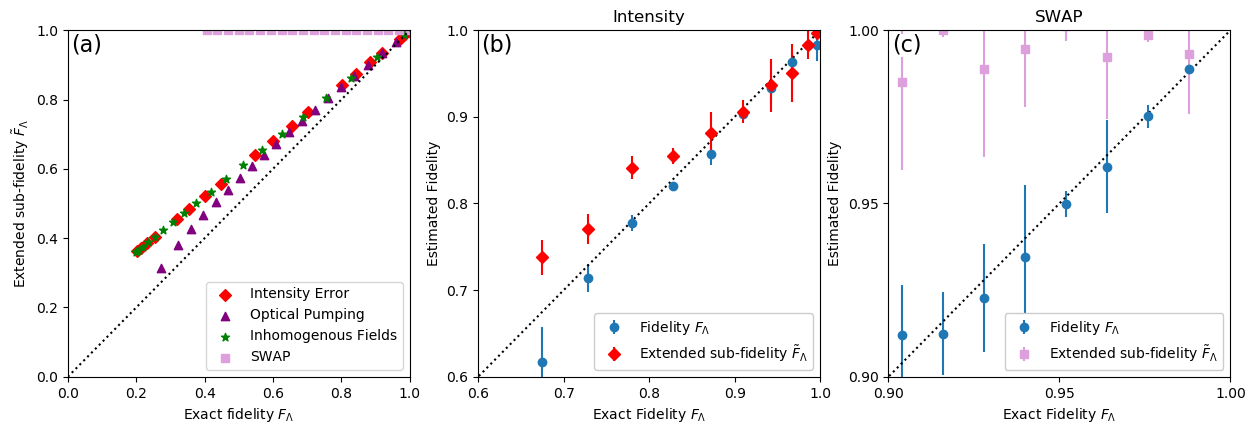}
    \caption{(a) The extended sub-fidelity $\tilde{F}_\Lambda$ of \cite{baldwin2020subspace} versus the exact fidelity $F_\Lambda$ that we can estimate in our paper, for a selection of error channels of varying strengths: intensity errors, which correspond to an overrotation $e^{-i\epsilon ZZ}$; optical pumping errors, which cause amplitude-damping on each qubit; inhomogeneous fields, which cause phase-damping on each qubit; and SWAP errors, which interchange the qubits. This plot corresponds to the exact value of both $F_\Lambda$ and $\tilde{F}_\Lambda$ that one estimates in an experiment. Note that while the $\tilde{F}_\Lambda$ agrees with $F_\Lambda$ in the limit $F_\Lambda\rightarrow 1$, in general the two do not agree, and there exists worst-case errors such as SWAP that $\tilde{F}_\Lambda$ cannot detect. (b,c) Simulation of an experiment that estimates $F_{\Lambda}$ versus $\tilde{F}_\Lambda$ for a total of $300,000$ unitaries, in the case of (b) intensity and (c) SWAP errors of varying strengths. These plots correspond to experiments that estimate the exact values shown in (a). We see that the difference between $F_\Lambda$ and $\tilde{F}_\Lambda$ can be discerned in a realistic experiment.}
    \label{fig:subspaceFidelity}
\end{figure*}
We demonstrate our method by generating random error channels and simulating our RB procedure. To generate a random error channel $\Lambda$ on a $d$-dimensional Hilbert space, we generate a random unitary on a $(d^2+d)$ dimensional Hilbert space and trace out $d^2$ auxiliary degrees of freedom; to adjust the fidelity, we take a convex combination of the resulting channel with the identity channel. All channels generated by this method are guaranteed to be completely positive trace-preserving (CPTP), thus valid error channels, and every CPTP channel can be generated via this method \cite{wood2015tensor}. For each error channel, we take data at 15 different values of $N$, and sample unitary operators at each value of $N$ until we have applied a total of $150,000$ unitary operators in total. For each string of unitary operators, we perform full state-vector simulation to apply the RB sequence of operators, and then generate a measurement outcome of $0$ or $1$ using the appropriate probability, and compute the character-weighted average. In Fig. \ref{fig:IndividualDecays}, we show the exact value of $S_i(N)$, the data we take to estimate $S_i(N)$, and the fit to $S_i(N)$ according to Eq. \ref{eq:fitSubspace} for a single random error channel $\Lambda$. 

From the fit data, we can estimate $F_\Lambda$ by applying Eq. \ref{eq:fidelityEst}:
\begin{equation}
    F_\Lambda =\frac{1+\lambda_{0}+8\lambda_{T\perp}+3\lambda_{TS}+3\lambda_{ST}+4}{20}.
\end{equation}
Note that the imaginary parts of $\lambda_{TS}$ and $\lambda_{ST}$ always cancel to give a real $F_\Lambda$ as expected. We use this formula to estimate the fidelity of our randomly generated error channels, and compare our estimate to the true fidelity in Fig. \ref{fig:FidelityEstimator}. We see that the true fidelity and the estimated fidelity agree within the error bars set by the uncertainty of our fits.

We can directly compare this with the original subspace RB method \cite{baldwin2020subspace}. That method served to estimate only $\lambda_{0}$ and $\lambda_{T\perp}$ ($t$ and $r$ in their notation), and they could only form a measure of gate fidelity using these quantities. They defined a so-called ``extended sub-fidelity" $\tilde{F}_\Lambda$, which they obtained by replacing $\lambda_{ST}$ and $\lambda_{TS}$ with the weighted average of the other eigenvalues: \rev{$\lambda_{ST}+\lambda_{TS} \approx 2\frac{1+\lambda_{0}+8\lambda_{T\perp}}{10}$. Explicitly, the extended sub-fidelity is given by \footnote{Our formula differs slightly from the corresponding formula in \cite{baldwin2020subspace}. Ref. \cite{baldwin2020subspace} considered approximating the process (also called entanglement) fidelity rather than the average fidelity; however, the average fidelity can be determined from the process fidelity\cite{horodecki1999general,nielsen2002simple}. To be consistent with the rest of our paper, we have translated their approximation of the process fidelity into the corresponding approximation of the average fidelity.} $$\tilde{F}_\lambda = \frac{16\lambda_{T\perp}+2\lambda_0+7}{25}.$$}\noindent
It is obvious that if $F_\Lambda\rightarrow 1$, $\tilde{F}_\Lambda\rightarrow 1$ as well, but the reverse is not necessarily true. We can compare the \rev{extended sub-fidelity} to the exact fidelity for the various noise sources explored \rev{in} \cite{baldwin2020subspace}. We consider intensity errors, which correspond to an overrotation $e^{-i\epsilon ZZ}$; optical pumping errors, which cause amplitude-damping on each qubit; inhomogenous fields, which cause phase-damping on each qubit; and SWAP errors, which interchange the qubits. The results are shown in Fig. \ref{fig:subspaceFidelity}. We see that while for most error sources $F_\Lambda\approx \tilde{F}_\Lambda$, there exist worse-case errors, such as SWAP, that cannot be detected by $\tilde{F}_\Lambda$. This was also noted in \cite{baldwin2020subspace} as a limitation of their method.

Our work also improves upon the original work in the mathematical assumptions needed to derive the benchmarking decays. Ref. \cite{baldwin2020subspace} derived their decay formulas under the assumption that their benchmarking set was of the form $\{U_T\oplus \sigma\phi_{U_T}:U_T\in G_T,\sigma=\pm\}$, where $\phi_{U_T}$ is some uncontrolled phase that occurs on the singlet space and $\sigma$ is a controllable phase between the singlet and triplet spaces. However, in practice they could not control $\sigma$ using a constant number of $U_{ZZ}$ gates. Instead, they implemented only $\{U_T\oplus \phi_{U_T}:U_T\in G_T\}$ and assumed the form of the decay would not change. In our work, by contrast, we have rigorously derived decay formulas for a group of gates that can be directly compiled into elementary symmetric gates using a constant number of $U_{ZZ}$.

We note that our method does require one additional capability that was not required in the original work: in order to estimate $S_{TS}(N)$, it is necessary to initialize and measure the $|01\rangle$ state. This requires additional experimental overhead to individually address and measure each qubit at the beginning and end of the benchmarking procedure. However, such overhead only contributes to the SPAM errors $\Lambda_P,\Lambda_M$, and does not affect our estimates of the entangling error. In any case, our method to measure $\lambda_0$ and $\lambda_{T\perp}$ does not require individual addressing, and can be viewed as a mathematically rigorous method to extract these parameters with no additional experimental requirements.

\section{Application: Leakage randomized benchmarking}
\label{section:leakage}
We may also use our generalized character RB to improve the leakage RB introduced in \cite{wood2018quantification}. In leakage RB, like subspace RB, one is given a group $G$ that preserves the splitting of the Hilbert space into subspaces $\mathcal H = \mathcal{H}_1\oplus\mathcal{H}_2$. In leakage RB, however, $\mathcal{H}_1\oplus\mathcal{H}_2$ does not represent the computational Hilbert space, and the goal is not to compute the average fidelity of the group operations. Instead, $\mathcal{H}_1$ represents the computational space of a quantum system (e.g. the two lowest-level states that encode a qubit), while $\mathcal{H}_2$ represents the leakage space outside the computational space. Leakage RB determines the average probability of ``leaking" from $\mathcal{H}_1$ to $\mathcal{H}_2$ or ``seeping" from $\mathcal{H}_2$ to $\mathcal{H}_1$. Noting that the probability of a state $|\rho\rangle\rangle$ being in subspace $\alpha=1,2$ is given by $\langle\langle\mathbbm{1}_\alpha|\rho\rangle\rangle$, define the leakage $L$ and seepage $S$ by
\begin{align}
    L := \int d\psi_1 \langle\langle\mathbbm{1}_2|\hat{\Lambda}|\psi_1\rangle\rangle = \frac{1}{d_1}\langle\langle\mathbbm{1}_2|\hat{\Lambda}|\mathbbm{1}_1\rangle\rangle\label{eq:leakage}\\
    S := \int d\psi_2 \langle\langle\mathbbm{1}_1|\hat{\Lambda}|\psi_2\rangle\rangle = \frac{1}{d_2}\langle\langle\mathbbm{1}_1|\hat{\Lambda}|\mathbbm{1}_2\rangle\rangle.
    \label{eq:seepage}
\end{align}
In addition, leakage RB determines the average fidelity restricted to the subspace $\mathcal{H}_1$
\begin{equation}
F_{\Lambda,1} = \int d\psi_1 \langle\langle\psi_1|\hat{\Lambda}|\psi_1\rangle\rangle.
\label{eq:restrictedFidelity}
\end{equation}
which is the appropriate measure of gate quality, since all computations take place in $\mathcal{H}_1$. Leakage RB is relevant for any system in which qubits are encoded in the subspace of a larger Hilbert space, which includes superconducting qubits \cite{gambetta2017building,chen2016measuring}, quantum dots, \cite{divincenzo2000universal,petta2005coherent,hanson2007universal,levy2002universal,andrews2019quantifying}, and trapped ions \cite{haffner2008quantum,hayes2020eliminating,stricker2020experimental}.

The original leakage RB could only be applied to a group 
\rev{\begin{equation}G = \{U_{1,b_1}\oplus \sigma U_{2,b_2}:(b_1,b_2)\in B_1\times B_2,\ \sigma=\pm 1\}\label{eq:originalLeakage}\end{equation}}\noindent
such that \rev{$G_1=\{U_{1,b_1}:b_1\in B_1\}$} and \rev{$G_2=\{U_{2,b_2}:b_2\in B_2\}$} form $2$-designs on their respective subspaces \footnote{Ref. \cite{wood2018quantification} originally claimed it was sufficient for \rev{$G_2$} to be a unitary 1-design, but this appears to be an error}. This is a very stringent condition, as it requires being able to independently control the computational and leakage subspaces. In many experimental implementations such control is not realistic; an experimental implementation of a gate \rev{$U_{1,b}$} on the computational subspace will naturally implement some \rev{$U_{2,b}$} on the leakage subspace. It is therefore desirable to develop a leakage RB that can be applied to more general groups.

Using our method, we can derive a leakage RB procedure that is more general than the one described in \cite{wood2018quantification}. Let $G$ be a group of unitary gates that preserve the subspaces of $\mathcal{H}$, and let $\Lambda$ be their shared error channel. To estimate $L$ and $S$, we will require that the only trivial representations of $G$ are $|\mathbbm{1}_1\rangle\rangle$ and $|\mathbbm{1}_2\rangle\rangle$, while to estimate $F_{\Lambda,1}$ we additionally require that the subrepresentation $\mathcal{H}_{1\perp}\subseteq \mathcal{H}_1\otimes\mathcal{H}_1$ orthogonal to $|\mathbbm{1}_1\rangle\rangle$ is an irrep of multiplicity $1$.

If we write our group $G$ as 
\rev{\begin{align*}
    G&=\{U_{b,\sigma}:b\in B,\ \sigma=\pm 1\}\\&=\{U_{1,b}\oplus \sigma U_{2,b}:b\in B,\ \sigma=\pm 1\}.
\end{align*}}\noindent
then the first condition is satisfied provided \rev{$\{U_{1,b}:b\in B\}$} and \rev{$\{U_{2,b}:b\in B\}$} are unitary $1$-designs, while the second condition is satisfied if provided these groups are unitary $2$-designs with dimensions $d_1\neq d_2$ (see Appendix \ref{appendix:choosingLeakageGroup} for proofs). Note that our requirements are significantly weaker than the original leakage RB, as we are only assuming the ability to implement an independent phase on the leakage space.

We outline our procedure for determining $L$, $S$, and $F_{\Lambda,1}$ for such groups $G$. Our procedure, like the original leakage RB, requires that SPAM errors do not mix the the subspaces $\mathcal{H}_1$ and $\mathcal{H}_2$, or at least that such mixing is negligible compared to the gate errors. In our derivations we will assume $\hat\Lambda_M=\hat\Lambda_P=\hat{\mathbbm{1}}$, although the generalization to errors that act only within the subspaces is trivial.

Our modified leakage RB procedure consists of the following steps:

\begin{enumerate}
    \item Choose an initial state $|\rho\rangle\rangle\in\mathcal{H}_1$ and measurement projector \rev{$|M\rangle\rangle = |\mathbbm{1}_1\rangle$}.
    \item For a given $N$, choose unitaries \rev{$U_0,U_1,...,U_N\in G$} randomly and uniformly. Compute $U_{N+1}=U_1^\dagger\cdots U_N^\dagger$.
    \item Prepare the state $|\rho\rangle\rangle$. Apply the gates $(U_1U_0), U_2,..., U_{N+1}$ sequentially, where $({U}_1{U}_0)$ is compiled as a single element of $G$.
    \item Perform a measurement of the observable $M$ to determine if the state is still in $\mathcal{H}_1$.
    \item Repeat steps 2-4 many times, to estimate the \rev{trivial} character-weighted survival probability
    \rev{\begin{equation}
        S_0(N)=\frac{1}{|G|^{N+1}}\sum_{U_0,...,U_N\in G} \text{Pr}_{U_0,...,U_N}
        \label{eq:modifiedSurvival}
    \end{equation}}\noindent
    where $\text{Pr}_{U_0,...,U_{N+1}}$ is the probability of remaining in $\mathcal{H}_1$ after applying gates \rev{$(U_1U_0),...,U_{N+1}$} to $|\rho\rangle\rangle$.
    \item Repeat steps 2-5 for different values of $N$.
    \item Fit the survival probability to a function of the form
    \begin{equation}
        S_0(N) = A\lambda^N+B
        \label{eq:modifiedFitForm}
    \end{equation}
    where $A$, $B$, and $\lambda$ are independent of $N$.
    \item Estimate $L$ and $S$ as
    \begin{align}
    L & = (1-B)(1-\lambda)\label{eq:leakageEst}\\
    S & = B(1-\lambda)\label{eq:seepageEst}
    \end{align}
    \item Use the original character RB (section \ref{section:generalizedRB}) to measure the character-weighted survival probability $S_{1\perp}$ associated to the irrep $\mathcal{H}_{1\perp}$. Fit 
    $$S_{1\perp}(N)=C\lambda_{1\perp}^N$$
    to estimate $\lambda_{1\perp}$.
    \item Estimate $F_{\Lambda,1}$ as 
    \begin{equation}
        F_{\Lambda,1} = \frac{(d_1^2-1)\lambda_{1\perp}+(d_1+1)(1-L)}{d_1^2+d_1}.
        \label{eq:restrictedFidelityEst}
    \end{equation}
\end{enumerate}

In the remainder of this section, we prove the correctness of this procedure and provide an example of such leakage RB.

\subsection{Deriving $L$ and $S$}
\label{section:LSProof}
Written out explicitly, the zeroth character-weighed survival probability is
\begin{equation*}
    S_0(N) = \langle\langle\mathbbm{1}_1|\hat{\Lambda}\hat{\Lambda}_G^N\hat{P}_0|\rho\rangle\rangle.
\end{equation*}
where $\hat{P}_0$ is the projector onto the trivial irrep, and we have made the same substitutions as in Section \ref{section:decayDerivation} to reduce the sum over $\{U_0,...,U_N\}$ to $G$-twirls and a projector. We know from Thm. \ref{thm:twirlForm} that $\hat{\Lambda}_G$ has a block-diagonal form $\hat{\Lambda}_G=\bigoplus_{i}\hat{Q}_i\otimes \hat{\mathbbm{1}}_i$, where $i$ indexes the irreps. Because $\hat\Lambda_G$ is multiplied by the projector $\hat{P}_0$ in Eq. \ref{eq:modifiedSurvival}, we may ignore all terms except $\hat{Q}_0\otimes \mathbbm{1}_0$. In terms of the eigendecomposition of $\hat{Q}_0$, we may write $\hat{Q}_0\otimes\mathbbm{1}_0=|e_0\rangle\rangle\langle\langle\overline{e}_0|+\lambda|e_1\rangle\rangle\langle\langle\overline{e}_1|$, so that
$$
{S}_0(N) = \langle\langle\mathbbm{1}_1|\hat{\Lambda}|e_0\rangle\rangle\langle\langle\overline{e}_0|\rho\rangle\rangle+\langle\langle\mathbbm{1}_1|\hat{\Lambda}|e_1\rangle\rangle\langle\langle\overline{e}_1|\rho\rangle\rangle\lambda^N
$$
where we have used the fact, noted in Section \ref{section:subspace}, that one eigenvalue of $\hat{Q}_0$ is always $1$. This justifies the fit Eq. \ref{eq:modifiedFitForm}.

So far, we have simply repeated the steps in Section \ref{section:decayDerivation} with slight modifications. However, in order to estimate $L$ and $S$ we will need to explicitly determine the eigendecomposition of \rev{$\hat{Q}_0\otimes\mathbbm{1}_0$}. We first note that the $\hat{P}_0$ subspace is spanned by the orthonormal vectors
$$
\frac{1}{\sqrt{d_1}}|\mathbbm{1}_1\rangle\rangle:= |\hat{\mathbbm{1}}_1\rangle\rangle\qquad\frac{1}{\sqrt{d_2}}|\mathbbm{1}_2\rangle\rangle:= |\hat{\mathbbm{1}}_2\rangle\rangle.
$$
Thus in terms of these basis vectors, we may write
$$
\hat{Q}_0\otimes \mathbbm{1}_0 = |\hat{\mathbbm{1}}_\alpha\rangle\rangle Q_{\alpha\beta}\langle\langle\hat{\mathbbm{1}}_\beta|
$$
for some constants $Q_{\alpha\beta}$. Noting that $M_{\alpha\beta}=\langle\langle\hat{\mathbbm{1}}_\alpha|\hat{\Lambda}_G|\hat{\mathbbm{1}}_\beta\rangle\rangle=\langle\langle\hat{\mathbbm{1}}_\alpha|\hat{\Lambda}|\hat{\mathbbm{1}}_\beta\rangle\rangle$, we can use the definitions of $L$ and $S$, (Eqs. \ref{eq:leakage} and \ref{eq:seepage}) to determine the constants $Q_{\alpha\beta}$:
$$
Q_{\alpha\beta}=\left(\begin{matrix} 1-L & \sqrt{\frac{d_2}{d_1}}S \\\sqrt{\frac{d_1}{d_2}}L & 1-S\end{matrix}\right)_{\alpha\beta}.
$$
From the explicit form of $Q_{\alpha\beta}$, we can determine the eigendecomposition of $\hat{Q}_0\otimes\mathbbm{1}_0$ via straightforward algebra \cite{wood2018quantification,baldwin2020subspace}:
\begin{align*}
|e_0\rangle\rangle &= \frac{S}{\sqrt{d_1}(L+S)}|\hat{\mathbbm{1}}_1\rangle\rangle+\frac{L}{\sqrt{d_2}(L+S)}|\hat{\mathbbm{1}}_2\rangle\rangle\\
|\overline{e}_0\rangle\rangle &= \sqrt{d_1}|\hat{\mathbbm{1}}_1\rangle\rangle + \sqrt{d_2}|\hat{\mathbbm{1}}_2\rangle\rangle\\
|e_1\rangle\rangle &= \sqrt{d_2}|\hat{\mathbbm{1}}_1\rangle\rangle - \sqrt{d_1}|\hat{\mathbbm{1}}_2\rangle\rangle\\
|\overline{e}_1\rangle\rangle &= \frac{L}{\sqrt{d_2}(L+S)}|\hat{\mathbbm{1}}_1\rangle\rangle-\frac{S}{\sqrt{d_1}(L+S)}|\hat{\mathbbm{1}}_2\rangle\rangle\\
\lambda & = 1-L-S
\end{align*}

Putting this together, we can evaluate the zeroth character-weighted survival probability as
\begin{equation*}
{S}_0(N) = \frac{S}{L+S}+
\frac{L}{L+S}(1-L-S)^{N+1}
\end{equation*}
We then have that $B = \frac{S}{L+S}$, which can be combined with $\lambda=1-L-S$ to immediately give Eqs. \ref{eq:leakageEst} and \ref{eq:seepageEst}.

\subsection{Deriving $F_{\Lambda,1}$}
\label{section:restrictedFidelityProof}
To establish Eq. \ref{eq:restrictedFidelityEst}, we first prove the following:
\begin{equation}
F_{\Lambda,1} = \frac{\Tr(\hat\Lambda\hat{P}_{11})+d_1(1-L)}{d_1^2+d_1}
\label{eq:restrictedTraceFidelity}
\end{equation}
where $\hat{P}_{11}$ is the projector onto $\mathcal{H}_1\otimes\mathcal{H}_1$. We use a similar method as in our proof of Eq. \ref{eq:TraceFidelity}. We first note that the restricted average fidelities of $\hat{\Lambda}$ and $\hat{P}_{11}\hat{\Lambda}\hat{P}_{11}:=\hat{\Lambda}_{11}$ are equal. $\hat{\Lambda}_{11}$ is an error channel restricted to the $\mathcal{H}_1$ subspace. We can twirl $\hat{\Lambda}_{11}$ by the full unitary group on $\mathcal{H}_1$ to get a depolarizing channel
$$
(\Lambda_{11})_G = p\mathbbm{1}_1+q\frac{1}{d_1}|\mathbbm{1}_1\rangle\rangle\langle\langle\mathbbm{1}_1|.
$$
Note that we have $p$ and $q$ rather than $p$ and $(1-p)$ as in Eq. \ref{eq:twirlDepolarizing}; this is because $\hat{\Lambda}_{11}$ is not necessarily trace-preserving. We can directly compute $F_{(\Lambda_{11})_G}=p+\frac{q}{d_1}$. Similarly, we can also directly compute $\Tr\left((\hat{\Lambda}_{11})_G\right)=pd_1^2+q$. Finally, we can directly compute $p+q=\frac{1}{d_1}\langle\langle \mathbbm{1}_1|(\hat{\Lambda}_{11})_G|\mathbbm{1}_1\rangle\rangle=\frac{1}{d_1}\langle\langle \mathbbm{1}_1|\hat{\Lambda}|\mathbbm{1}_1\rangle\rangle=1-L$. Combining these three equations gives Eq. \ref{eq:restrictedTraceFidelity}.

To estimate $\Tr(\hat\Lambda\hat{P}_{11})$, we can divide this trace up into two pieces:
$$
\Tr(\hat\Lambda\hat{P}_{11}) = \langle\langle\hat{\mathbbm{1}}_1|\hat{\Lambda}|\hat{\mathbbm{1}}_1\rangle\rangle +\Tr(\hat\Lambda\hat{P}_{1\perp})= (1-L) +\Tr(\hat\Lambda\hat{P}_{1\perp})
$$
where $\hat{P}_{1\perp}$ is the projector onto $\mathcal{H}_{1\perp}$. The latter trace is simply $(d_1^2-1)\lambda_{1\perp}$. Plugging this in to Eq. \ref{eq:restrictedTraceFidelity} gives Eq. \ref{eq:restrictedFidelityEst} as desired.

\subsection{Example: Two-qubit logical encodings}

\rev{Here, we illustrate the advantages of our leakage RB over the original leakage RB of \cite{wood2018quantification} via a single-qubit example where \cite{wood2018quantification} is not applicable.}. We consider an encoding of a single logical qubit into the $S_z=0$ subspace of two physical qubits. This encoding is frequently used in quantum dot qubits \cite{petta2005coherent,hanson2007universal,levy2002universal}. The computational space $\mathcal{H}_1$ is spanned by
$$
|0\rangle:=\frac{|01\rangle-|10\rangle}{\sqrt 2},\qquad |1\rangle:=\frac{|01\rangle+|10\rangle}{\sqrt 2}
$$
and the leakage space $\mathcal{H}_2$ is spanned by
$$
|2\rangle:=|00\rangle,\qquad |3\rangle:=|11\rangle.
$$

Let's assume we implement single-qubit rotations on our computational space by the operators
$$
R_X = X_C\oplus Z_L\qquad R_Z = Z_C\oplus \frac{X_L+Z_L}{\sqrt 2},
$$
where implementing an $X$ or $Z$ rotation on the computational space naturally induces a specific rotation on the leakage space.

\begin{figure}
    \centering
    \includegraphics[width=\columnwidth]{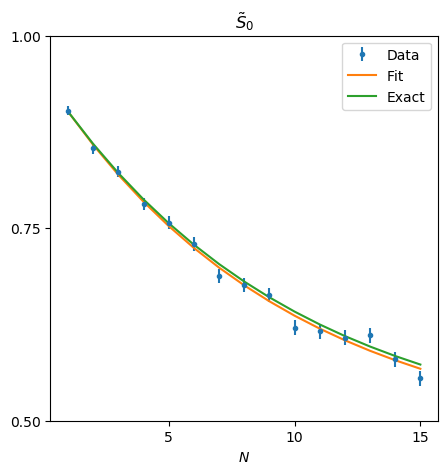}
    \caption{The predicted and measured $S_0(N)$ for a single randomly generated error channel. The actual decay (green) is an exponential decay given by Eq. \ref{eq:modifiedFitForm}. We estimate $S_0(N)$ by applying random gates and measuring the final state (blue points). The data is fit to a function of the form of Eq. \ref{eq:modifiedFitForm}, from which we estimate $L$ and $S$.}
    \label{fig:ExampleFitLeakage}
\end{figure}

\begin{figure}
    \centering
    \includegraphics[width=\columnwidth]{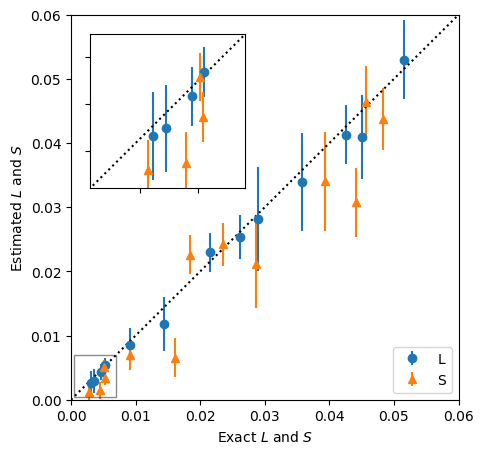}
    \caption{The exact and estimated leakage and seepage for a selection of randomly generated error channels. Each estimate was based on data taken over 15 different lengths $N$. Each estimate was arrived at by applying a total of 300,000 unitary group elements. The diagonal line denotes the points where the exact and estimated fidelities are equal. The data agree with this line with a reduced $\chi^2$ value of $1.3$, indicating good agreement.}
    \label{fig:LeakageEstimator}
\end{figure}

We will take our benchmarking group to be the group generated by these two rotations, $G=\langle R_X,R_Z \rangle$. This group has a total of 16 elements. It cannot be written as \rev{direct sum of a group acting on $\mathcal{H}_1$ and a group acting on $\mathcal{H}_2$ as in Eq. \ref{eq:originalLeakage}}, so the leakage RB \rev{of \cite{wood2018quantification}} does not apply. However, elementary calculation shows that the natural representation of this group contains exactly two trivial irreps, spanned by $|\mathbbm{1}_1\rangle\rangle$ and $|\mathbbm{1}_2\rangle\rangle$, and we can therefore use our procedure to estimate $L$ and $S$.

We illustrate this method by generating random error channels and simulating the RB procedure. In Figs. \ref{fig:ExampleFitLeakage}, we show the exact value of $S_0(N)$, the data we take to estimate $S_0(N)$, and the fit to $ S_0(N)$ according to Eq. \ref{eq:modifiedFitForm}. In Fig. \ref{fig:LeakageEstimator}, we repeat the same fitting procedure for a set of randomly generated error channels, and estimate $L$ and $S$ using Eq. \ref{eq:leakageEst}. We see that the true values of $L$ and $S$ and our estimate for $L$ and $S$ agree within the error bars set by the uncertainty in our fits.

We cannot apply our method to find $F_{\Lambda,1}$ because in this example $\mathcal{H}_{2\perp}$ and $\mathcal{H}_{1\perp}$ share an irrep. This reflects the overall difficulty in applying leakage RB to physically realistic circumstances. While this work provides the most widely applicable method for leakage RB currently available, more work is needed to develop a truly general procedure.

\section{Application: Matchgate RB}
\label{section:matchgate}
We can also use our method to introduce a new procedure for scalably benchmarking circuits made of matchgates. Matchgates are 2-qubit gates of the form
$$
G(A,B) = \left(\begin{matrix}a_{11}&0&0&a_{12}\\0&b_{11}&b_{12}&0\\0&b_{21}&b_{22}&0\\a_{21}&0&0&a_{22}\end{matrix}\right)
$$
with $\det(A)=\det(B)$. In other words, a matchgate acts as $A$ on the even parity subspace spanned by $\{|00\rangle,|11\rangle\}$ and as $B$ on the odd parity subspace spanned by $\{|01\rangle,|10\rangle\}$. Without loss of generality we may assume $\det(A)=\det(B)=1$. The set of matchgates acting on a line of nearest neighbors is efficiently simulable \cite{valiant2002quantum,terhal2002classical,jozsa2008matchgates,brod2016efficient}. However matchgates acting on next-nearest-neighbors \cite{jozsa2008matchgates} or acting on any nontrivial connectivity graph \cite{brod2012geometries,brod2014computational} are universal, as are matchgates plus arbitrary one-qubit gates \cite{imamog1999quantum,terhal2002classical}, matchgates plus a single $G(A,B)$ with $\det(A)\neq\det(B)$ \cite{brod2011extending}, matchgates acting on entangled input states \cite{hebenstreit2020computational}, and matchgates plus adaptive measurements \cite{hebenstreit2020computational}.  Implementations of arbitrary matchgates have been proposed for trapped atom systems \cite{herrera2014infrared} and have been experimentally demonstrated in photonic systems \cite{ramelow2010matchgate}.

We will derive a benchmarking procedure that determines the average fidelity of circuits composed of matchgates using a number of experiments that scales polynomially in the number of qubits. Our method is the matchgate equivalent of traditional Clifford RB, which characterizes the average fidelity of circuits composed of Hadamard, phase, and CNOT gates, and also requires a number of experiments that scales polynomially in the number of qubits. However, we will see that benchmarking matchgate circuits requires the full machinery of non-multiplicity-free character RB.

\subsection{The matchgate group}

Consider a line of $n$ qubits with nearest-neighbor connectivity. Let $G$ be the {\bf matchgate group} on $n$ qubits, the group of all unitaries generated from nearest-neighbor matchgates. Naively, $G$ could contain arbitrarily long circuits of matchgates. However, one can prove that every element of $G$ can be realized using circuits of at most $4n^3$ nearest-neighbor matchgates \cite[Thm.~5]{jozsa2008matchgates}. We will provide a simplified proof of this fact below.

Following \cite{terhal2002classical,jozsa2008matchgates}, our primary tool to understand $G$ will be the Jordan-Wigner transformation \cite{jordan1928uber}. Define $2n$ Majorana operators $\{c_i\}$ as
\begin{align*}
c_{2k-1}&=Z_1\cdots Z_{k-1}X_k\\
c_{2k}&=Z_1\cdots Z_{k-1}Y_k
\end{align*}
for $k=1,...,n$. The $\{c_m\}$ are Hermitian operators satisfying $\{c_\ell,c_m\}=2\delta_{\ell m}$. Polynomials in the $\{c_m\}$ form a Hermitian basis for the space of all density matrices, so a unitary $U$ is defined by its action on the $\{c_m\}$ up to a potential phase. Because of our restriction $\det(A)=\det(B)=1$, there is no phase freedom on the matchgates or any product of matchgates, so the action of $U\in G$ is entirely determined by its action on the $\{c_m\}$. We make two claims \cite{jozsa2008matchgates}:
\begin{claim}
    Every $U\in G$ in the matchgate group acts on the Majorana operators as a proper rotation. In other words, there exists some $R\in SO(2n)$ such that $Uc_\ell U^\dagger = R_{\ell m}c_m$.
    \label{claim:rotation}
\end{claim}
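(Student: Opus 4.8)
The plan is to verify the claim first for the generators of $G$ and then propagate it to all of $G$ by a closure argument. Introduce the set of unitaries that act on the Majoranas as a proper rotation,
\[
\mathcal{G}=\bigl\{\,U\ \text{unitary}:\ \exists\,R\in SO(2n)\ \text{with}\ Uc_\ell U^\dagger=R_{\ell m}c_m\ \forall\,\ell\,\bigr\}.
\]
I would first check that $\mathcal{G}$ is a group: if $U,V\in\mathcal{G}$ correspond to $R,S\in SO(2n)$ then $UV$ corresponds to the matrix product $SR$ and $U^\dagger$ to $R^{-1}$, both of which lie in $SO(2n)$ since it is a group, and the identity corresponds to $\mathbbm{1}$. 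Because $G$ is by definition generated by nearest-neighbor matchgates, it then suffices to show that a single such matchgate lies in $\mathcal{G}$.

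Next I would fix a matchgate $U=G(A,B)$ on neighboring qubits $k,k+1$ (with $\det A=\det B=1$) and reduce to a four-dimensional problem. Conjugation by $U$ fixes every $c_m$ whose qubit index lies below $k$, since then $c_m$ has support disjoint from qubits $k,k+1$ and commutes with $U$; it also fixes every $c_m$ whose qubit index lies above $k+1$, since then $c_m$ contains the factor $Z_kZ_{k+1}$, which commutes with $U$ because matchgates preserve fermionic parity. Hence $U$ can only mix the four Majoranas $c_{2k-1},\dots,c_{2k+2}$, and it remains to show that on this block $U$ acts linearly and as an element of $SO(4)$ (fixing the orthogonal complement).

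For this core step I would invoke the free-fermionic structure of matchgates. Since $\det A=\det B=1$ we have $A,B\in SU(2)$, so $U$ lies in the connected group $SU(2)\times SU(2)$ acting on the even- and odd-parity sectors of the two qubits. The Majorana bilinears $c_ac_b$ with $a,b\in\{2k-1,\dots,2k+2\}$ satisfy $(c_ac_b)^\dagger=-c_ac_b$ and span a six-dimensional real Lie algebra isomorphic to $\mathfrak{so}(4)\cong\mathfrak{su}(2)\oplus\mathfrak{su}(2)$ (for instance $c_{2k-1}c_{2k}=iZ_k$ and $c_{2k}c_{2k+1}=iX_kX_{k+1}$). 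Consequently $U=e^{\Omega}$ for some $\Omega=\sum_{a<b}\omega_{ab}\,c_ac_b$ with real coefficients $\omega_{ab}$, and conjugation gives $Uc_\ell U^\dagger=e^{\operatorname{ad}_\Omega}(c_\ell)$. Because $[c_ac_b,c_\ell]$ is again a linear combination of Majoranas---an immediate consequence of $\{c_\ell,c_m\}=2\delta_{\ell m}$---the map $\operatorname{ad}_\Omega$ preserves $\mathrm{span}\{c_m\}$ and restricts there to a real matrix $r$; preservation of the anticommutators forces $r$ to be antisymmetric, so $R=e^{r}\in SO(2n)$ is orthogonal with determinant $+1$. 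Alternatively, once linearity and orthogonality are established, $\det R=+1$ follows from connectedness: $U$ is joined to $\mathbbm{1}$ within the matchgate group, along which $R$ varies continuously in $O(2n)$ from $\mathbbm{1}$ and so cannot leave the component $SO(2n)$.

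The step I expect to be the main obstacle is the identification of a general matchgate with the exponential of a Majorana quadratic, i.e. the statement that matchgates are exactly the even free-fermionic (Gaussian) unitaries; this is where the essential content of the claim lies. Making it rigorous amounts to matching the six real parameters of $SU(2)\times SU(2)$ against the six bilinears $c_ac_b$ on the four relevant Majoranas and checking that the exponential map is onto the connected two-qubit matchgate group. Everything else---linearity of the bracket, orthogonality from the anticommutation relations, and the unit determinant---is then routine. A self-contained alternative that avoids this Lie-algebra input is to compute $UX_kU^\dagger$, $UY_kU^\dagger$, $U(Z_kX_{k+1})U^\dagger$, and $U(Z_kY_{k+1})U^\dagger$ directly from the block form of $G(A,B)$, read off the resulting $4\times4$ real matrix, and check its orthogonality and unit determinant by hand; this is finite but tedious.
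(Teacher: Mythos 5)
Your proposal is correct and follows essentially the same route as the paper: both reduce the claim to a single nearest-neighbor matchgate and rest on identifying the two-qubit matchgate Lie algebra with the real span of the Majorana bilinears $c_ac_b$ (equivalently the six Pauli strings $Z_1,\ Y_1X_2,\ Y_1Y_2,\ X_1X_2,\ X_1Y_2,\ Z_2$), so that conjugation acts through $\mathfrak{so}(4)$ and exponentiates to a proper rotation. The differences are presentational rather than substantive --- the paper expands an infinitesimal matchgate $U=\mathbbm{1}-i\epsilon M$ and reads off the antisymmetric generator explicitly, while you invoke surjectivity of the exponential map on the compact connected group $SU(2)\times SU(2)$ together with the adjoint action $e^{\operatorname{ad}_\Omega}$ --- and your parity argument that Majoranas on qubits away from $k,k+1$ are left fixed (via commutation with $Z_kZ_{k+1}$) makes explicit a step the paper leaves tacit.
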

\begin{claim}
    Any unitary operator $U\in U(2^n)$ that acts on the Majorana operators as a proper rotation is in the matchgate group $G$. In particular, such a $U$ can be decomposed into a product of at most $2n^3$ nearest-neighbor matchgates.
    \label{claim:decomposition}
\end{claim}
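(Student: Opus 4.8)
The plan is to show that the map $\varphi:U\mapsto R$, sending a unitary to the orthogonal matrix describing its action on the Majoranas, is a homomorphism onto $SO(2n)$ whose only ambiguity is a global phase, and then to exhibit for each $R$ an explicit short product of nearest-neighbor matchgates realizing it. First I would record that $\varphi$ is a group homomorphism (since $U(Vc_\ell V^\dagger)U^\dagger$ composes the rotations) landing in $SO(2n)$ by Claim~\ref{claim:rotation}, and that it is injective up to phase: if $Uc_\ell U^\dagger=c_\ell$ for all $\ell$ then $U$ commutes with every monomial in the $\{c_m\}$, and because these monomials span the full operator algebra, $U=e^{i\gamma}\mathbbm{1}$. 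The normalization $\det(A)=\det(B)=1$ noted above then pins the phase, so an element of $G$ is determined by its image $R$. Given this, Claim~\ref{claim:decomposition} reduces to showing every $R\in SO(2n)$ equals $\varphi(V)$ for a product $V$ of few nearest-neighbor matchgates: then $\varphi(V^\dagger U)=\mathbbm{1}$ forces $U$ to equal $V$ up to the (fixed) phase, placing $U\in G$.

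The key lemma is that rotations in adjacent Majorana planes generate $SO(2n)$ and that each is a single matchgate. I would verify, using $c_{2k-1}c_{2k}=iZ_k$ and $c_{2k}c_{2k+1}=iX_kX_{k+1}$, that $e^{i\theta Z_k/2}$ (realized as a matchgate on the pair $(k,k+1)$) implements the rotation in $\mathrm{span}\{c_{2k-1},c_{2k}\}$ and that $e^{i\theta X_kX_{k+1}/2}$ implements the rotation in $\mathrm{span}\{c_{2k},c_{2k+1}\}$, and check directly that both are gates $G(A,B)$ with $\det A=\det B=1$ on neighboring qubits. Thus every elementary plane rotation $r_m(\theta)$ in $\mathrm{span}\{c_m,c_{m+1}\}$, for $m=1,\dots,2n-1$, is the image of one nearest-neighbor matchgate.

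The heart of the argument is the decomposition and its count. I would reduce an arbitrary $R\in SO(2n)$ to the identity by a Givens/Hessenberg sweep using only these adjacent plane rotations: working column by column, annihilate the entries of the first column from the bottom upward by successive rotations $r_{2n-1},r_{2n-2},\dots,r_1$, bringing that column (and, by orthonormality of the rows, the first row) to $e_1$, then recurse on the trailing $(2n-1)$-block, whose planes remain adjacent. This uses $\sum_{j=1}^{2n-1}j=\binom{2n}{2}=2n^2-n$ adjacent rotations, hence at most that many matchgates, comfortably within the stated $2n^3$ bound; indeed one can afford the cruder routing in which a non-adjacent plane rotation is conjugated into position by $O(n)$ neighbor swaps and still stay under $2n^3$.

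The main obstacle I anticipate is not the counting but the two bookkeeping points that make the reduction honest. First, confirming that the elementary plane rotations really are nearest-neighbor matchgates with the $\det=1$ normalization: the $X_kX_{k+1}$ generator in particular requires carefully tracking the Jordan-Wigner strings so that its conjugation acts as a clean rotation in exactly one plane and trivially elsewhere. Second, closing the phase gap in the final step: having matched $R$, one must argue that the specific $U$ in the hypothesis, and not merely some phase multiple of it, lies in $G$, which is precisely where the $\det(A)=\det(B)=1$ normalization that eliminates the phase freedom is essential.
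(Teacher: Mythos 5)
Your proof is correct and reaches the result by a genuinely different route than the paper. For the decomposition itself, the paper invokes the Hoffman algorithm to write an arbitrary $R\in SO(2n)$ as a product of $n(2n-1)$ plane rotations on \emph{arbitrary} Majorana pairs $(c_\ell,c_m)$, and then pays a routing cost: each non-adjacent plane rotation is conjugated into place by $2s$ nearest-neighbor exchange rotations, giving $n(2n-1)(2s+1)<4n^3$ matchgates in total. Your adjacent-only Givens sweep eliminates the routing step entirely: zeroing each column from the bottom up with $r_{2n-1},\dots,r_1$ and recursing uses exactly $\binom{2n}{2}=n(2n-1)$ adjacent plane rotations, each a single nearest-neighbor matchgate (your generators check out: $c_{2k-1}c_{2k}=iZ_k$ and $c_{2k}c_{2k+1}=iX_kX_{k+1}$, with the Jordan-Wigner strings cancelling, and both exponentials are $G(A,B)$ gates with $\det A=\det B=1$). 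That yields an $O(n^2)$ count, sharper than the paper's bound; in fact the paper's proof only establishes $4n^3$ while the claim as stated asserts $2n^3$, so your argument also repairs that mismatch. A further difference is that you spell out the group-theoretic bookkeeping — that $U\mapsto R$ is a homomorphism whose kernel consists of scalars, so that realizing $R$ by a matchgate product $V$ forces $U=e^{i\gamma}V$ — which the paper keeps in the discussion preceding the claims rather than in the proof, even though it is what actually licenses the conclusion that the \emph{given} $U$ lies in $G$. One caveat on your final step: the $\det(A)=\det(B)=1$ normalization does not pin the phase completely, since $-\mathbbm{1}=e^{i\pi Z_1}=G(-\mathbbm{1},-\mathbbm{1})$ is itself a matchgate acting trivially on every $c_m$; the map $G\to SO(2n)$ is two-to-one (the spin double cover), so a unitary acting as $R$ is fixed only up to sign inside $G$. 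This imprecision is inherited from the paper's own assertion of ``no phase freedom,'' and it affects neither membership in $G$ nor your gate count, but your phrase ``pins the phase'' should be weakened to ``pins the phase up to a sign, and both signs lie in $G$.''
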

These two claims together imply that the matchgate group is isomorphic to $SO(2N)$, and that every element of the matchgate group can be efficiently implemented in a quantum circuit. In particular, this shows that the matchgate group is a compact group, thus we can apply character RB.

\subsubsection{Proof of claims}

\begin{proof}[Proof of Claim \ref{claim:rotation}]
We provide a simplification of the proof in \cite{jozsa2008matchgates}. We prove that a nearest-neighbor matchgate acting on qubits $k$ and $k+1$ acts as a rotation mixing $c_{2k-1}$, $c_{2k}$, $c_{2k+1}$, and $c_{2k+2}$, and that all such rotations are realized by matchgates. It then follows that all products of matchgates also act as rotations on the Majorana operators.

Without loss of generality, we can restrict ourselves to $k=1$, so our Majorana operators are given by
\begin{align*}
    c_1 &= X_1 & c_3&=Z_1X_2\\
    c_2 &= Y_2 & c_4&=Z_1Y_2.
\end{align*}
We can write an infinitesimal matchgate as $U =\mathbbm{1}-i\epsilon M$, where $M$ must be of the form
$$
\alpha_{12}Z_1-\alpha_{13}Y_1X_2-\alpha_{14}Y_1Y_2+\alpha_{23}X_1X_2+\alpha_{24}X_1Y_2 +\alpha_{34}Z_2
$$
with $\alpha_{ab}\in \mathbbm{R}$. One can directly check that $U$ satisfies
\begin{align*}
Uc_1U^\dagger &= c_1+2\epsilon\alpha_{12}c_2+2\epsilon\alpha_{13}c_3+2\epsilon\alpha_{14} c_4\\
Uc_2U^\dagger &= -2\epsilon \alpha_{12}c_1+c_2+2\epsilon\alpha_{23}c_3+2\epsilon\alpha_{24} c_4\\
Uc_3U^\dagger &= -2\epsilon \alpha_{13}c_1-2\epsilon\alpha_{23}c_2+c_3+2\epsilon\alpha_{34} c_4\\
Uc_4U^\dagger &= -2\epsilon \alpha_{14}c_1-2\epsilon\alpha_{24}c_2-2\epsilon\alpha_{34}c_3+c_4
\end{align*}
so that $Uc_iU^\dagger=R_{ij}c_j$ with
$$
R = \mathbbm{1}+2\epsilon\left(\begin{matrix}
0 & \alpha_{12} & \alpha_{13} &\alpha_{14}\\
-\alpha_{12} & 0 & \alpha_{23} &\alpha_{24}\\
-\alpha_{13}&-\alpha_{23} & 0 &\alpha_{34}\\
-\alpha_{14}&-\alpha_{24} & -\alpha_{34}&0
\end{matrix}\right)
$$
We therefore see that infinitesimal matchgates generate the whole Lie algebra $\mathfrak{so}(4)$ of real antisymmetric matrices. By exponentiating the infinitesimal matchgates, we generate the full set of matchgates; in this process, we generate the full group $SO(4)$ as well.
\end{proof}

\begin{proof}[Proof of Claim \ref{claim:decomposition}]
We note, following \cite{jozsa2008matchgates}, that every $R\in SO(2n)$ can be decomposed into $n(2n-1)$ rotations that act as the identity on all but $2$ basis elements $c_\ell,c_m$ by the Hoffman algorithm \cite{raffenetti1969parametrization,hoffman1972generalization}. In turn, a rotation mixing $c_\ell$ and $c_m$ with $\ell<m$ can be decomposed into a product of $s:=\left(\lceil \frac{m}{2}\rceil-\lceil\frac{\ell}{2}\rceil-1\right)$ rotations that exchange $(c_\ell\leftrightarrow c_{\ell+2})$, $(c_{\ell+2}\leftrightarrow c_{\ell+4})$, ..., $(c_{\ell+2s-2}\leftrightarrow c_{\ell+2s})$, followed by a rotation that mixes $c_{\ell+2s}$ and $c_m$, followed by $s$ rotations that exchange $(c_{\ell+2s}\leftrightarrow c_{\ell+2s-2})$, $(c_{\ell+2s-2}\leftrightarrow c_{\ell+2n-4})$, ...,  $(c_{\ell+2}\leftrightarrow c_{\ell})$. Each of these rotations only involve Majorana operators associated to neighboring qubits, and thus can be written as a matchgate. Thus, $R$ can be realized as the product of a total of $n(2n-1)(2s+1)<4n^3$ matchgates, as claimed.
\end{proof}

We note that an arbitrary rotation between two Majorana operators 
$$
\left(\begin{matrix}c_\ell\\c_m\end{matrix}\right)\rightarrow \left(\begin{matrix}\cos(\theta) &\sin(\theta)\\-\sin(\theta)&\cos(\theta)\end{matrix}\right)\left(\begin{matrix}c_\ell\\c_m\end{matrix}\right)
$$
is generated by the unitary $U=e^{\frac{\theta}{2}c_\ell c_m}$. In the case where $\left|\lceil\frac{m}{2}\rceil-\lceil\frac{\ell}{2}\rceil\right|\leq 1$, this $U$ is a nearest-neighbor matchgate. For example, if $\ell=3$, $m=5$, then we have $U=e^{-i\frac{\theta}{2}Y_2X_3}$. Thus, the above decomposition of $R$ into $<4n^3$ two-Majorana rotations gives an explicit formula for the matchgates needed to construct $R$. We provide Python code to realize the Hoffman decomposition of $R$ into elementary rotations, as well as the reduction of $R$ to a matchgate circuit, at \cite{claes2020orthogonal}.

\subsection{Irreps of the matchgate group}

We want to understand how the natural representation of $G$ decomposes into irreps. This is most convenient in the basis of polynomials of $\{c_m\}$. Note that $c_m^2=1$, so our polynomials are at most degree $1$ in any given $c_m$ and there are $4^N$ such polynomials. Explicitly, an orthonormal basis of $\mathcal{H}\otimes\mathcal{H}$ is given by
\begin{align*}
    \frac{1}{2^{N/2}}\mathbbm{1} := & |\hat{\mathbbm{1}}\rangle\rangle\\
    \frac{1}{2^{N/2}}c_{m_1} :=& |m_1\rangle\rangle & 1\leq m_1 \leq 2n\\
    \frac{1}{2^{N/2}}c_{m_1}c_{m_2} :=& |m_1m_2\rangle\rangle& 1\leq m_1<m_2\leq 2n\\
    & \vdots &\vdots \qquad\\
    |m_1\cdots m_{2n-1}\rangle\rangle\span & 1\leq m_1<\cdots \leq 2n\\
    \span |1\cdots 2n\rangle\rangle.&
\end{align*}
Define $\mathcal{H}_i:=\text{span}\{|m_1\cdots m_i\rangle\rangle\}$ to be the space spanned by degree-$i$ basis elements, for each $i=0,...,2n$. Then $\mathcal{H}_i\simeq \bigwedge^i\mathbbm{C}^{2n}$, the $i$-fold wedge product of $\mathbbm{C}^{2n}$. It's clear that $\hat{U}$ preserves each $\mathcal{H}_i$, so that each $\mathcal{H}_i$ is a subrepresentation. On $\mathcal{H}_1$, $\hat{U}$ acts as the rotation operator $R$ associated to $U$:
$$
\hat{U}|i_1\rangle\rangle = R_{i_1j_1}|j_1\rangle\rangle.
$$
On general $\mathcal{H}_i$, $\hat{U}$ acts as the wedge product of the rotation operator:
\begin{align*}
\hat{U}|\ell_1\cdots \ell_i\rangle\rangle = \sum_{\mathclap{\substack{m_1<\cdots <m_i\\ \sigma\in S^i}}} (-1)^\sigma R_{\ell_1m_{\sigma 1}}\cdots R_{\ell_i m_{\sigma i}}|m_1\cdots m_i\rangle\rangle.
\end{align*}

\begin{claim}
The natural representation of the matchgate group decomposes into the irreps
$$\mathcal{H}_0\oplus \mathcal{H}_1\oplus \cdots \oplus\mathcal{H}_{n,1}\oplus \mathcal{H}_{n,2}\oplus \cdots\oplus \mathcal{H}_{2n-1}\oplus \mathcal{H}_{2n}.$$
where $\mathcal{H}_n=\mathcal{H}_{n,1}\oplus\mathcal{H}_{n,2}$. Explicitly, we have
\begin{align*}
    \mathcal{H}_{n,1} & = \text{span}\{|\ell_1\cdots\ell_n\rangle\rangle+i^n(-1)^{\sigma(\ell,m)}|m_1\cdots m_n\rangle\rangle\}\\
    \mathcal{H}_{n,2} & = \text{span}\{|\ell_1\cdots\ell_n\rangle\rangle-i^n(-1)^{\sigma(\ell,m)}|m_1\cdots m_n\rangle\rangle\}
\end{align*}
where $\{m_a\}$ is the complement of $\{\ell_a\}$ and $\sigma(\ell,m)$ is the permutation that takes $(\ell_1,...,\ell_n,m_1,...,m_{n})\mapsto (1,...,2n)$. Note that if $n$ is even these are real representations, while for $n$ odd these representations are complex conjugates of each other. The irreps $\mathcal{H}_i$ and $\mathcal{H}_{2n-i}$ are isomorphic for $i\neq n$, but no other irreps are isomorphic to each other. 
\label{claim:decompositionMatchgate}
\end{claim}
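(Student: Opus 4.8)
The plan is to recognize this as a statement about the exterior powers of the defining representation of $SO(2n)$ and to exploit the Hodge star as the key intertwiner. By Claims \ref{claim:rotation} and \ref{claim:decomposition} the matchgate group is isomorphic to $SO(2n)$, acting on $\mathcal{H}_1\simeq\mathbbm{C}^{2n}$ as the defining rotation $R$ and on $\mathcal{H}_i\simeq\bigwedge^i\mathbbm{C}^{2n}$ as the $i$-th exterior power $\bigwedge^i R$ (this is exactly the wedge-product action already written out above). Hence the decomposition of the natural representation is precisely the decomposition of $\bigoplus_{i=0}^{2n}\bigwedge^i\mathbbm{C}^{2n}$ under $SO(2n)$, a classical problem.

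First I would introduce the Hodge star $*:\bigwedge^i\mathbbm{C}^{2n}\to\bigwedge^{2n-i}\mathbbm{C}^{2n}$, built from the $SO(2n)$-invariant inner product and the invariant volume element $|1\cdots 2n\rangle\rangle$, so that $*|\ell_1\cdots\ell_i\rangle\rangle=(-1)^{\sigma(\ell,m)}|m_1\cdots m_{2n-i}\rangle\rangle$ with $\{m\}$ the complement of $\{\ell\}$. Because a proper rotation preserves both the metric and the orientation, $*$ commutes with the group action and is therefore a $G$-intertwiner. For $i\neq n$ this gives $\mathcal{H}_i\cong\mathcal{H}_{2n-i}$ immediately. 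For $i=n$ the star is an endomorphism of $\mathcal{H}_n$ with $*^2=(-1)^{n(2n-n)}=(-1)^{n}$, whose eigenspaces are $G$-invariant and are exactly $\mathcal{H}_{n,1}$ and $\mathcal{H}_{n,2}$. A short computation shows the combinations $|\ell_1\cdots\ell_n\rangle\rangle\pm i^n(-1)^{\sigma(\ell,m)}|m_1\cdots m_n\rangle\rangle$ are eigenvectors of $*$ with eigenvalue $\pm i^{-n}$, the factor $i^n$ serving only to normalize the two pieces. Since $\pm i^{-n}$ equals $\pm 1$ for $n$ even and $\pm i$ for $n$ odd, this reproduces the stated fact that $\mathcal{H}_{n,1},\mathcal{H}_{n,2}$ are real for $n$ even and complex conjugates for $n$ odd.

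Next I would establish irreducibility. The cleanest route is to invoke the classical decomposition of exterior powers of the standard $SO(2n)$ representation (see \cite{fulton2013representation}): $\bigwedge^i\mathbbm{C}^{2n}$ is irreducible for $i\neq n$, and $\bigwedge^n$ splits into the two self-dual/anti-self-dual pieces found above. If a self-contained verification is preferred, one applies the Haar version of Schur normalization, Fact \ref{fact:magnitudeCharacter}: the character of $\bigwedge^i R$ is the elementary symmetric polynomial $e_i$ in the eigenvalues $\{e^{\pm i\theta_k}\}$ of $R$, and the Weyl integration formula gives $\int_{SO(2n)}|e_i|^2\,dR=1$ for $i\neq n$, while the middle character splits as $\chi_{n,1}+\chi_{n,2}$ with each half normalized to $1$. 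The isomorphism pattern is then finished by comparing characters: the $\chi_{\mathcal{H}_j}$ are pairwise distinct except for the Hodge-paired $\mathcal{H}_i\cong\mathcal{H}_{2n-i}$, and $\chi_{n,1}\neq\chi_{n,2}$ separates the two middle halves (automatic for $n$ odd, where they are non-real conjugates).

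I expect the main obstacle to be precisely this irreducibility step. The Hodge-star argument cleanly produces the correct \emph{list} of subrepresentations together with their reality properties and the duality isomorphisms, but showing that each listed piece is genuinely irreducible — and that the two middle halves of equal dimension $\tfrac12\binom{2n}{n}$ are inequivalent to each other and to every $\mathcal{H}_i$ — is the representation-theoretic heart of the claim. My inclination is to import this from the standard $D_n$ theory, falling back on the Weyl-integration character computation only if a fully self-contained proof is desired.
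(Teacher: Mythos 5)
Your proposal takes essentially the same route as the paper: the paper likewise defines the Hodge star, uses the fact that it commutes with the group action to obtain $\mathcal{H}_i\simeq\mathcal{H}_{2n-i}$ for $i\neq n$ (and the splitting of $\mathcal{H}_n$ into its eigenspaces), and defers irreducibility of $\mathcal{H}_i$, $\mathcal{H}_{n,1}$, $\mathcal{H}_{n,2}$ to classical $SO(2n)$ representation theory (Knapp, ch.~4, where you would cite Fulton--Harris or fall back on Weyl integration). Your explicit computation of the $*$-eigenvalues $\pm i^{-n}$ on the middle degree and the character-based separation of the irreps simply fill in details the paper states without proof, so the two arguments coincide in substance.
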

\begin{proof}
Define the Hodge star operator $*:\mathcal{H}_i\rightarrow\mathcal{H}_{2n-i}$ by
$$
*|\ell_1\cdots\ell_i\rangle\rangle = (-1)^{\sigma(\ell,m)}|m_1\cdots m_{2n-i}\rangle\rangle
$$
where $\{m_a\}$ is the complement of $\{\ell_a\}$ and $\sigma(\ell,m)$ is the permutation that takes $(\ell_1,...,\ell_i,m_1,...,m_{2n-i})\mapsto (1,...,2n)$. It is straightforward to show that $*$ commutes with the action of $U$, and thus provides the isomorphism of representations $\mathcal{H}_i\simeq \mathcal{H}_{2n-i}$ when $i\neq n$. We defer the proof that the $\mathcal{H}_i$, $\mathcal{H}_{n,1}$, and $\mathcal{H}_{n,2}$ are in fact irreducible to chapter 4 of \cite{knapp2001representation}.
\end{proof}

\subsection{Benchmarking the matchgate group}

Let $\overline{G}\subset G$ be the subgroup of the matchgate group generated by $R\in SO(2n)$ with $R$ diagonal. Such an $R$ is always of the form $R = \text{diag}\{\sigma_1,...,\sigma_{2n}\}$ with $\sigma_1\sigma_2\cdots\sigma_{2n}=1$. The action on a state $|m_1\cdots m_i\rangle\rangle\in\mathcal{H}_i$ is given by 
$$
\hat{U}|m_1\cdots m_i\rangle\rangle = \sigma_{i_1}\cdots \sigma_{i_m}|m_1\cdots m_i\rangle\rangle
$$
and therefore the states $|i_1\cdots i_m\rangle\rangle$ are the irreps of the natural representation of $\overline{G}$. Because of the constraint $\sigma_1\sigma_2\cdots\sigma_{2N}=1$, each irrep has multiplicity 2, with the irrep spanned by $|m_1\cdots m_i\rangle\rangle$ isomorphic to the irrep spanned by $|\ell_1\cdots \ell_{2n-i}\rangle\rangle$ with $\{\ell_a\}$ the complement of $\{m_a\}$. For each $i=0,...,n$, we can define a character function and corresponding projector
\begin{align*}
\chi_{\overline i}(R) =& \sigma_1\cdots\sigma_{i}\\
\hat{P}_{\overline i}=&|1\cdots i\rangle\rangle\langle\langle 1\cdots i|
\\&+ |(i+1)\cdots 2n\rangle\rangle\langle\langle (i+1)\cdots 2n|.    
\end{align*}
These projectors project into the multiplicty-two irreps $\mathcal{H}_i\oplus\mathcal{H}_{2n-i}$ for $\overline{i}=0,...,(n-1)$, and project into the two inequivalent irreps $\mathcal{H}_{n,1}\oplus\mathcal{H}_{n,2}$ for $\overline i=n$.

As our initial state, for each $i=0,...,n$ we choose
$$
|\rho_i\rangle\rangle = \left\{\begin{array}{ll}|0\cdots + \cdots 0 \rangle\rangle, & i=2k-1\\|0\cdots 0\rangle\rangle, & i=2k.\end{array}\right.
$$
where $k$th qubit is in the $+$ state of the $X$ operator for $i=2k-1$. Provided we can prepare both $X$-basis and $Z$-basis single qubit states, we can prepare $|\rho_i\rangle\rangle$.

As our measurement projector, for each $i=0,...,n$ we choose
$$
|M_i\rangle\rangle = \left\{\begin{array}{ll}\frac{1}{2}(X_{k}+\mathbbm{1}), & i=2k-1\\\frac{1}{2}\left(\prod_{\alpha>n-k}Z_\alpha+\mathbbm{1}\right), & i=2k.\end{array}\right.
$$
For $i=2k-1$, this corresponds to a measurement of the $k$th qubit in the $X$ basis, while for $i=2k$ this corresponds to a measurement of the product of the last $k$ qubits in the $Z$ basis.

With these choices, the $S_i(N)$ are approximately
$$
    S_i(N)  \approx \frac{\langle\langle M_i|\hat{P}_{\overline i}|\rho_i\rangle\rangle}{\text{dim}(\mathcal{\overline H}_{\overline i})}=\left\{\begin{array}{ll}1, & i= 0\\\frac{1}{2}, & 1\leq i\leq n\end{array}\right.
$$
and the relative uncertainty does not depend on the number of qubits. This is therefore a scalable method to benchmark the matchgate group.

The form of the decay is given by
\begin{equation}
    S_i(N)=\left\{\begin{array}{ll}C_0\lambda_0^N+B,&i=0\\
    C_{i,1}\lambda_{i,1}^N+C_{i,2}\lambda_{i,2}^N, &1\leq i\leq n.
    \end{array}\right.
    \label{eq:fitFormMatchgates}
\end{equation}
For each $i$, either $\lambda_{i,1},\lambda_{i,2},C_{i,1},C_{i,2}\in \mathbbm{R}$ or $\lambda_{i,1}=\lambda_{i,2}^*$ and $C_{i,1}=C_{i,2}^*$, since $S_i(N)$ is always real. For the case of $i=n$, we know that the former case holds when $n$ is even and the latter when $n$ is odd, by Claim \ref{claim:decompositionMatchgate}. For $1\leq i<n$, one should assume whichever case gives the best fit. Note that in all cases, we fit at most $4$ real parameters.

As an example, we simulate a noisy implementation of the matchgate group on $n=3$ qubits. In Fig. \ref{fig:ExampleFitMatchgates}, we show the exact value of $S_i(N)$, the data we take to estimate $S_i(N)$, and the fit to $S_i(N)$ according to Eq. \ref{eq:fitFormMatchgates} for a single random error channel $\Lambda$. In Fig. \ref{fig:FidelityEstimatesMatchgates}, we do the same fitting procedure for a set of randomly generated error channels, and estimate their fidelity. We see that the true fidelity and the estimated fidelity agree within the error bars set by the uncertainty of our fits.
\begin{figure}
    \centering
    \includegraphics[width=\columnwidth]{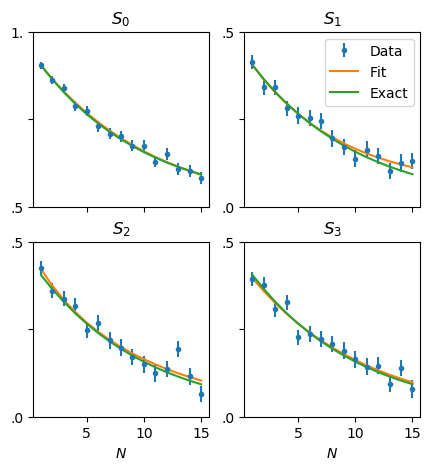}
    \caption{The predicted and measured character-weighted survival probability for a random error channel. The exact decay (green) is an exponential decay given by one of Eq. \ref{eq:fitFormMatchgates}. We estimate $S_i(N)$ by applying random gates and measuring the final state (blue points). The data is fit to an appropriate function (orange) from which we estimate the fidelity.}
    \label{fig:ExampleFitMatchgates}
\end{figure}
\begin{figure}
    \centering
    \includegraphics[width=\columnwidth]{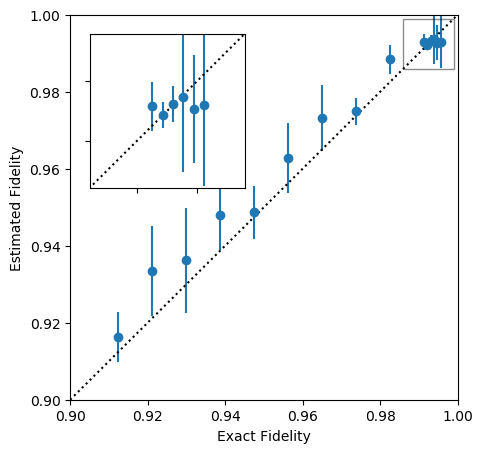}
    \caption{The exact and estimated fidelity for a selection of random errors. Each estimate was based on data taken over 15 different lengths $N$. Each estimate was arrived at by applying a total of 300,000 unitary group elements. The diagonal line denotes the points where the exact and estimated fidelities are equal. The data agree with the line with a reduced $\chi^2$ value of $1.0$, indicating good agreement.}
    \label{fig:FidelityEstimatesMatchgates}
\end{figure}

\section{Conclusion and Discussions}
\label{section:conclusion}
In this work, we extended the recently introduced character RB of \cite{helsen2019new} to groups with multiplicity. Compared to earlier work on benchmarking arbitrary groups \cite{brown2018randomized,francca2018approximate}, our method allows us to accurately determine the fidelity and fit fewer exponentials to experimental data. The generalization to non-multiplicity-free groups was essential to deriving a rigorous version of subspace RB and a scalable RB protocol for the matchgate group. This generalization also allowed us to develop an improved leakage RB protocol.

While we derived the character RB procedure in more generality than \cite{helsen2019new}, our generalization still requires groups of small multiplicity, since the multiplicity of the group determines the number of exponential decays in our fit function. Robustly fitting a sum of many exponential decays is challenging, especially when the decay rates are roughly equal \rev{\cite{bromage1983quantification,clayden1992multiexponential}}. It is likely straightforward to benchmark groups in which the trivial irrep has multiplicity three, as the corresponding decay $S_0(N)=A+B\lambda_{0,1}^N+C\lambda_{0,2}^N$ has only five real parameters. An irrep of multiplicity three with a real character function $\chi$ has a decay with six parameters, which may be feasible with sufficient data. A general irrep of multiplicity three, however, requires fitting nine real parameters, which is likely unfeasible for realistic amounts of data. Higher-multiplicity irreps are correspondingly more difficult. All of the groups we considered in the examples in this paper decomposed into irreps with multiplicity at most 2.

All our applications involved a group that preserved some subspace of the Hilbert space. In the case of subspace RB, the group preserved the triplet and singlet subspaces; in the case of leakage RB, the computational and leakage subspaces; and in the case of matchgate RB, the even and odd parity subspaces. Any group that preserves subspaces necessarily has multiplicity, since there is always a copy of the trivial irrep in each subspace. It is an open question whether non-multiplicity-free character RB has useful applications to groups that do not preserve subspaces but nonetheless have multiplicity.

One group related to the matchgate group that would be of immediate experimental interest is the \textbf{XY group}, the subgroup of the matchgate group generated by only nearest-neighbor XY mixers $U_{XY}(\theta)=\exp\left\{i\theta(X_1\otimes X_2+Y_1\otimes Y_2)\right\}$. Unlike general matchgates, XY mixers can be naturally realized on superconducting qubits \cite{abrams2019implementation,foxen2020demonstrating}, and they are a necessary ingredient in extensions of the QAOA algorithm \cite{hadfield2019quantum,wang2020x,cook2019quantum}. In addition, XY mixers are efficiently simulable on a line but become universal on nontrivial graphs, just like the full matchgate group\cite{brod2014computational}. However, XY mixers on $N$ qubits preserve the $(N+1)$ subspaces of definite Hamming weight; this implies that the trivial representation of the XY group must have multiplicity $(N+1)$. \rev{Thus, our method cannot be used to scalably benchmark the XY group; even $N=2$ qubits is likely infeasible. On the other hand, \cite{helsen2020matchgate} recently introduced a compilation of general two-qubit matchgates into products of four XY mixers and single-qubit gates. Using this decomposition, the average fidelity of the resulting two-qubit matchgates can be used as a proxy for the fidelity of the XY mixers. This method is similar to the benchmarking framework in our Sec \ref{section:subspace}, where we compile group elements into a fixed number of gates of interest (in our case, $U_{ZZ}$), with the modification that \cite{helsen2020matchgate} allows the gate of interest $XY(\theta)$ to vary. It is an open question if there is a generalization of this compilation to the matchgate group on $N>2$ qubits.}

While our leakage RB necessitates the fewest assumptions to date, it is still too restrictive for many experimental implementations. Most notably, our RB requires the set of gates to be a group, which may be unrealistic; often, the gates will only form a group modulo rotations in the leakage space. In experimental implementations of leakage RB, this problem is usually simply ignored and an exponential decay is posited to exist with the usual relation to the leakage rate \cite{chen2016measuring,andrews2019quantifying}. It is worth exploring whether the methods used here can be further extended to such sets of gates that are only groups in the computational subspace, modulo rotations in the leakage subspace, to provide a more rigorous foundation for leakage RB experiments.

There are two obvious directions for further applications of character RB, with or without multiplicity. First, character RB has the potential to drastically expand the family of groups that can be scalably benchmarked. This requires both finding a group $G$ that can be efficiently compiled into elementary gates whose multiplicity is bounded as the number of qubits $n$ increases, as well as finding a subgroup $\overline G\subseteq G$ whose irreps have slowly growing dimension. As a simple example, the subgroups of the Clifford group considered in \cite{brown2018randomized} likely have a scalable protocol based on character RB, with $\overline G$ given by the Pauli group. Increasing the number of groups that can be scalably benchmarked gives new ways of characterizing compiled gates, especially non-Clifford gates.

Second, character RB can be used to characterize specific elementary gates by combining these gates into a group, as we did in Section \ref{section:subspace} for subspace RB. This requires finding a group that can be implemented by combining a fixed number of the gate to be characterized with known high-fidelity gates. Constructing these groups is a non-trivial task, as we have seen in the case of the $U_{ZZ}$ operator above. We leave the exploration of such applications to future work. 

\rev{\emph{Note added.} After the first version of this paper was posted, \cite{helsen2020matchgate} was posted to the arXiv which also proposes a matchgate RB. Their method relies on enlarging the matchgate group with additional unitaries to avoid representations with multiplicity, but is otherwise similar to ours. As we mentioned in this paper, our character RB does not apply to the group generated by nearest-neighbor XY gates. While \cite{helsen2020matchgate} does not propose a method to benchmark the group generated by nearest-neighbor XY mixers, they do demonstrate a method to compile two-qubit matchgate elements using a fixed number of XY mixers and additional single-qubit gates, allowing the matchgate RB to be used to characterize XY mixers, as discussed above.}

\section*{Acknowledgements}

JC thanks Alexandre Pyvovarov for useful discussions on representations of $\bigwedge^i\mathbbm{C}^{2N}$. We are grateful for support from NASA Ames Research Center, the  
NASA Advanced Exploration systems (AES) program, and the NASA Transformative Aeronautic Concepts Program (TACP).
We are also grateful for support from the AFRL Information Directorate under grant F4HBKC4162G001.
JC was supported by the USRA Feynman Quantum Academy funded by the NAMS R\&D Student Program at NASA Ames Research Center. 
JC and ZW are also supported by NASA Academic Mission Services, Contract No. NNA16BD14C.

\bibliography{references.bib}

\appendix

\section{Gate-dependent errors}
\label{appendix:gateDependent}
%\beginblockedit

In this appendix, we extend the work of \cite{wallman2018randomized,merkel2018randomized,helsen2019new} on gate-dependent errors to the case of non-multiplicity-free character RB. Ref. \cite{helsen2019new} had previously generalized \cite{wallman2018randomized} to establish that multiplicity-free character RB is robust to gate-dependent errors. Rather than follow the method of \cite{wallman2018randomized,helsen2019new} we use the Fourier transform method of \cite{merkel2018randomized}, which is more natural for groups with multiplicity. Our ultimate goal is the following theorem:
\begin{theorem}
    Let $G$ be a benchmarking group, and let $i$ be an irrep of the natural representation with multiplicity $a_i$. Assume each gate $U\in G$ is realized as a noisy operator $\eta(U)$, but \emph{do not} assume we can write $\eta(U)=\hat{\Lambda}\hat{U}$ for some $U$-independent noise channel $\Lambda$. Then the character-weighted survival probability is given by
    $$
    S_i(N) = \sum_{j=1}^{a_i}C_{i,j}\lambda_{i,j}^N+\epsilon_N
    $$
    where $\epsilon_N$ is an error term satisfying $|\epsilon_N|<\delta_1\delta_2^N$ and $\delta_1,\delta_2$ are both small for high-fidelity gates. Since we know that $\lambda_{i,j}\approx 1$ for high-fidelity gates, $\epsilon_N$ is negligible compared to $S_i(N)$ for moderately large $N$.
    \label{thm:gateDependence}
\end{theorem}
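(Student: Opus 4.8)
The plan is to adapt the Fourier-transform analysis of gate-dependent randomized benchmarking developed in \cite{merkel2018randomized} (and previously applied to multiplicity-free character RB in \cite{helsen2019new}) to irreps of arbitrary multiplicity. The starting point is to write out $S_i(N)$ exactly as in Section~\ref{section:decayDerivation}, but with every ideal operator $\hat\Lambda\hat U$ replaced by the genuine noisy implementation $\eta(U)$, and with the compiled first gate appearing as the single noisy superoperator $\eta(U_1 U_0)$, which no longer factorizes. Because $\eta$ is no longer a homomorphism composed with a fixed channel, the factorizations used in the gate-independent proof fail, and one cannot simply perform the $U_0$-sum into a projector or reduce the $D_k$-sums to $G$-twirls. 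Instead, I would treat $U\mapsto\eta(U)$ as a superoperator-valued function on $G$ and expand it in matrix coefficients of the irreps of $G$ (Peter--Weyl/Fourier), defining the Fourier coefficient $\tilde\eta(\sigma)=\frac{1}{|G|}\sum_{U\in G}\eta(U)\otimes\phi_\sigma(U^{-1})$ for each irrep $\sigma$. The character weighting $\chi^*_{\overline i}(U_0)$ is precisely the object that projects the $U_0$-dependence onto the $\overline{\phi}_{\overline i}$ Fourier component, which together with the containment $\overline{\mathcal{H}}_{\overline i}\subseteq\mathbbm{C}^{a_i}\otimes\mathcal{H}_i$ selects the natural-representation irrep $\phi_i$.

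Next I would carry out the group sums. After the substitution $D_k=U_k\cdots U_1$ (as in Section~\ref{section:decayDerivation}) and insertion of the Fourier expansion into every factor, each intermediate sum over a $D_k$ produces, via Schur orthogonality of matrix coefficients, a contraction that forces the irrep label of neighbouring factors to agree. The dominant contribution is the one in which every factor carries the label $\phi_i$: it collapses to the $N$-th power of the restriction of $\tilde\eta(\phi_i)$ to the $\phi_i$-isotypic subspace, an $a_i\times a_i$ matrix $\hat Q_i^{\,\eta}$ playing exactly the role of $\hat Q_i$ from Theorem~\ref{thm:twirlForm}. Diagonalizing $\hat Q_i^{\,\eta}$ yields the $a_i$ eigenvalues $\lambda_{i,j}$ and hence the claimed sum $\sum_{j=1}^{a_i}C_{i,j}\lambda_{i,j}^N$; since for gate-independent noise $\tilde\eta(\phi_i)$ is supported entirely on the $\phi_i$-isotypic block and reduces to $\hat Q_i$, the eigenvalues $\lambda_{i,j}$ are continuous perturbations of the gate-independent ones and remain near $1$ for high-fidelity gates.

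All remaining contributions---those in which the Fourier label changes somewhere along the chain---are collected into $\epsilon_N$, and bounding these is the \emph{main obstacle}. Each label change is weighted by the ``off-isotypic'' part of some $\tilde\eta(\sigma)$, i.e.\ by the amount by which $\eta$ fails to look like a fixed channel times a representation; this is the gate-dependence, it is small for high-fidelity gates, and it supplies the small prefactor $\delta_1$. To obtain the factor $\delta_2^N$ one needs a spectral-gap argument: I would show that the transfer superoperator built from $\tilde\eta$ splits into the dominant $\phi_i$-block (eigenvalues $\approx\lambda_{i,j}$) and a complementary part whose contribution to $S_i(N)$ decays geometrically with rate $\delta_2<1$, with $\delta_2$ controlled by the gate-dependence. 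The delicate points are (i) establishing that this spectral separation survives the perturbation uniformly in $N$ so the geometric resummation converges, and (ii) phrasing the gate-dependence parameter so that both $\delta_1$ and $\delta_2$ are manifestly small whenever the gates are high fidelity. This is where the multiplicity $a_i>1$ makes the bookkeeping heavier than in \cite{helsen2019new}: the dominant block is a genuine matrix rather than a scalar, so one must track the possible non-normality of $\hat Q_i^{\,\eta}$ when comparing eigenvalue perturbations. Once the geometric bound $|\epsilon_N|<\delta_1\delta_2^N$ is in hand, comparison with $\lambda_{i,j}\approx 1$ yields the final statement.
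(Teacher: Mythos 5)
Your overall architecture is the same as the paper's: the paper's Appendix on gate-dependent errors also uses the generalized (Peter--Weyl) Fourier transform of the superoperator-valued function $\eta$, collapses the random-gate average into powers of a Fourier coefficient, uses the character weighting together with $\overline{\mathcal{H}}_{\overline i}\subseteq\mathbbm{C}^{a_i}\otimes\mathcal{H}_i$ to isolate the irrep $i$, and finishes with a perturbative spectral argument. However, your accounting of the error term contains a genuine confusion. The ``label-changing'' contributions that you identify as the main obstacle do not exist: after the change of variables the $(N+1)$ noisy factors form an $(N+1)$-fold convolution, and the convolution identity $(\widetilde{\eta*\xi})(\sigma)=\widetilde\eta(\sigma)\widetilde\xi(\sigma)$ --- i.e.\ Schur orthogonality of matrix coefficients --- forces a \emph{single} irrep label down the entire chain exactly, not approximately. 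Combined with the character projection (and the assumption that $\overline\phi_{\overline i}$ appears only inside $\phi_i$), one gets the exact identity
$$
\hat O_i=\frac{d_i}{d_{\overline i}}\Tr_i\!\left(\left[\mathbbm{1}\otimes\hat P_{\overline i}^{T}\right]\widetilde\eta(i)^{N+1}\right),
$$
so no Fourier coefficient $\widetilde\eta(\sigma)$ with $\sigma\neq i$ ever enters $S_i(N)$ and there is nothing of that kind to bound. Since your proposed $\epsilon_N$ consists of terms that are identically zero, while your proposed main term (the power of an $a_i\times a_i$ restriction) does not equal the exact expression above, your decomposition does not add up: the true error lives entirely \emph{inside} $\widetilde\eta(i)$, in its behavior off the isotypic subspace.

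This leads to the second problem: the dominant object cannot be the restriction $\hat Q_i^{\,\eta}=P\widetilde\eta(i)P$ to the isotypic subspace, because $(P\widetilde\eta(i)P)^{N}\neq P\,\widetilde\eta(i)^{N}P$, and if you define $\lambda_{i,j}$ as eigenvalues of the restriction, the mismatch with the true eigenvalues produces discrepancies of order $N\lambda^{N-1}\times O(\text{perturbation})$, which grow with $N$ and cannot be absorbed into a bound of the form $\delta_1\delta_2^N$ with $\delta_2$ small. The correct route --- which your final spectral-gap paragraph essentially is, and which is what the paper does --- takes $\lambda_{i,j}$ to be the $a_i$ eigenvalues of the \emph{full} matrix $\widetilde\eta(i)$ lying near $1$, and $\epsilon_N$ to be the contribution of the complementary spectral part, whose spectral radius $\delta_2$ is near $0$. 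The ingredient you are missing to justify this spectral picture is the paper's counting step: by Schur orthogonality of characters, the multiplicity of the trivial irrep in $U\mapsto\hat U\otimes\phi_i^*(U)$ is exactly $a_i$, so the ideal Fourier coefficient $\widetilde\eta_{\mathrm{ideal}}(i)$ is a rank-$a_i$ projector; linearity of the Fourier transform then makes $\widetilde\eta(i)$ a small perturbation of a rank-$a_i$ projector for high-fidelity gates, even gate-dependent ones, which is precisely what guarantees exactly $a_i$ eigenvalues near $1$ and all the others near $0$. With that in place your geometric bound follows; your concern about non-normality is legitimate (the paper glosses over it), but it only affects the constant $\delta_1$, not the structure of the bound.
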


This theorem implies we may safely use the RB protocols even in the presence of gate-dependent errors, although we will see the interpretation of the estimated fidelity is slightly modified. 

In what follows, we will use the notation $\mathbbm{E}\left[\cdot\right]$ for the average $\frac{1}{|G|}\sum_{U\in G}\left(\cdot\right)$ or $\int_G dU\left(\cdot \right)$ to make our equations cleaner. We will also use the shorthand $d_i$ for $\dim(\mathcal{H}_i)$.

\subsection{The generalized Fourier transform and its application to character RB}

We first define a generalization of the Fourier transform to matrix-valued functions of a group $G$\cite{moore2015approximate,gowers2017inverse}. For any group $G$ we define $\widetilde G$ to index the irreps of $G$, and we assume WLOG that the irreps are unitary. Given a function $\eta: G\rightarrow \mathcal{L}(\mathbbm{C}^D)$, for each $i\in \widetilde G$ we define the {\bf Fourier transform} $\widetilde\eta(i) \in\mathcal{L}(\mathbbm{C}^D)\otimes\mathcal{L}(\mathcal{H}_i)$ to be
\begin{equation}
\widetilde\eta(i) = \mathbbm{E}\left[\eta(U)\otimes \phi_i^*(U)\right].
\label{eq:FourierDefn}
\end{equation}
where $\phi_i:G\rightarrow \mathcal{L}(\mathcal{H}_i)$ is the $i$th irrep.

Given two matrix-valued functions $\eta,\xi:G\rightarrow \mathcal{L}(\mathbbm{C}^D)$, we can also define the {\bf convolution} $(\eta *\xi)$ by
\begin{equation}
    (\eta*\xi)(U_0)=\mathbbm{E}\left[\eta(U^\dagger)\xi(U U_0)\right].
    \label{eq:ConvolutionDefn}
\end{equation}

The generalized Fourier transform shares many properties with the usual Fourier transform; in particular, we will use the following identities \cite{gowers2017inverse,merkel2018randomized}:
\begin{align}
    (\widetilde{\eta * \xi})(i) &= \widetilde\eta(i)\widetilde\xi(i)\label{eq:ConvolutionIdentity} \\
    \mathbbm{E}\left[\text{Tr}\left(\eta(U)\xi^\dagger(U)\right)\right] &= \sum_i d_i\text{Tr}\left(\widetilde\eta(i)\widetilde\xi^\dagger(i)\right)\label{eq:ParsevalIdentity}\\
    \eta(U) &= \sum_i d_i \text{Tr}_i\left([\mathbbm{1}\otimes \phi_i^T(U)]\widetilde{\eta}(i)\right) \label{eq:InverseIdentity}
\end{align}
where in the last line, $\text{Tr}_i\left(\cdot\right)$ is the partial trace over $\mathcal{H}_i$. Eq. \ref{eq:ConvolutionIdentity} is the analogue of the usual convolution identity for Fourier transforms, Eq. \ref{eq:ParsevalIdentity} is the analogue of Parseval's identity, and Eq. \ref{eq:InverseIdentity} gives the inverse Fourier transform.

The generalized Fourier transformation is useful because it allows us to express the result of a character RB experiment in a simpler form. A character RB experiment estimates a matrix element of the operator
$$
    \hat{O}_i:=\mathbbm{E}\left[\eta(U_1^\dagger\cdots U_N^\dagger)\eta(U_N)\cdots \eta(U_2)\eta(U_1U_0)\chi_{\overline i}^*(U_0)\right]
$$
where the expectation value is over all $U_0\in \overline{G}$, $U_1,...,U_N\in G$. Through the change of variables $U_i\rightarrow U_iU_{i-1}\cdots U_1$ for $i=1,...,N$, we can rewrite this expression as a convolution:
\begin{align*}
    \hat{O}_i&=\mathbbm{E}\left[\eta(U_N^\dagger)\eta(U_N U_{N-1}^\dagger)\cdots \eta(U_2U_1^\dagger)\eta(U_1U_0)\chi_{\overline i}^*(U_0)\right]\\
    &=\mathbbm{E}\left[\underbrace{(\eta *\cdots *\eta)}_{(N+1) \text{ times}}(U_0)\chi_{\overline i}^*(U_0)\right]
\end{align*}
Using the inverse Fourier transform (Eq. \ref{eq:InverseIdentity}) we can write $(\eta *\cdots*\eta)(U_0)$ in terms of $(\widetilde{\eta *\cdots*\eta})(i')$, while the convolution identity (Eq. \ref{eq:ConvolutionIdentity}) allows us to simplify $(\widetilde{\eta *\cdots*\eta})(i')=\widetilde{\eta}(i')^{N+1}$. In total, we find
\begin{align*}
    \hat{O}_i &= \sum_{i'}d_{i'}\text{Tr}_{i'}\left(
    \left[\mathbbm{1}\otimes\mathbbm{E}\left[\chi_{\overline i}^*(U_0)\phi_{i'}(U_0)\right]^T\right]\widetilde\eta(i')^{N+1}\right).
\end{align*}
We now use the projection formula (Fact \ref{fact:Projection}) to note that $d_{\overline i}\mathbbm{E}\left[\chi_{\overline i}^*(U_0)\phi_{i'}(U_0)\right]$ is just the projection of $\phi_{i'}$ onto the irrep $\overline{i}$ of $\overline G$. By assumption, the irrep $\phi_{\overline i}$ is a subrepresentation of only $\phi_i$, and not a subrepresentation of any $\phi_{i'}$ with $i'\neq i$. Therefore, 
\begin{align*}
    \hat{O}_i &= \frac{d_i}{d_{\overline i}}\text{Tr}_{i}\left(
    \left[\mathbbm{1}\otimes \hat{P}_{\overline i}^T\right]\widetilde\eta(i)^{N+1}\right).
\end{align*}

We therefore see that the outcome of a character RB experiment, $S_i(N)$, can be described by the Fourier transform of $\eta$ via
\begin{align}
S_i(N) &= \langle\langle M_i|\hat{\Lambda}_M \hat{O}_i\hat{\Lambda}_P|\rho_i\rangle\rangle     \label{eq:decayFormFourier}\\
&=\frac{d_i}{d_{\overline i}}\langle\langle M_i|\hat{\Lambda}_M \text{Tr}_{i}\left(
    \left[\mathbbm{1}\otimes \hat{P}^T_{\overline i}\right]\widetilde\eta(i)^{N+1}\right)\hat{\Lambda}_P|\rho_i\rangle\rangle \nonumber
\end{align}
and the decay of $S_i(N)$ is determined by the eigenvalues of $\widetilde{\eta}(i)$.

\subsection{Simplifying the decay}

In the case of ideal gates $\eta_{\text{ideal}}(U)=\hat{U}$, we have that $\widetilde{\eta}_{\text{ideal}}(i)$ is given by
$$
\widetilde{\eta}_{\text{ideal}}(i) = \mathbbm{E}\left[\hat{U}\otimes \phi_i(U)\right]
$$
This can be simplified by noting that the map $\eta_{\text{ideal}}\otimes \phi_i:U\mapsto \hat{U}\otimes \phi_i(U)$ is a representation of $G$, and $\mathbbm{E}\left[\hat{U}\otimes \phi_i(U)\right]$ is the projection of this representation onto the copies of the trivial irrep (Fact \ref{fact:Projection}). We can count the multiplicity of the trivial irrep in $(\eta_{\text{ideal}}\otimes\phi_i)$ using the following fact:

\begin{fact}[Schur orthonormality]
If $\chi$ is the character of an arbitrary representation $\phi$, and $\chi_i$ is the character of an irrep $\phi_i$, the multiplicity $a_i$ of $\phi_i$ is
$$
a_i=\frac{1}{|G|}\sum_{U\in G} \chi_i^*(U)\chi(U).
$$
\label{fact:orthogonality}
\end{fact}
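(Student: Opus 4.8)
The plan is to obtain this as an immediate corollary of the projection formula (Fact~\ref{fact:Projection}), which already packages the required representation theory. First I would decompose $\phi\simeq a_1\phi_1\oplus\cdots\oplus a_I\phi_I$ and let $\hat{P}_i$ denote the orthogonal projector onto the isotypic component $\mathbbm{C}^{a_i}\otimes V_i$. Fact~\ref{fact:Projection} supplies the explicit expression $\hat{P}_i=\frac{\dim V_i}{|G|}\sum_{U\in G}\chi_i(U)^*\phi(U)$, and the entire argument then consists of taking the trace of both sides of this identity.

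On the left-hand side, $\hat{P}_i$ is a (self-adjoint, idempotent) projector whose image is $\mathbbm{C}^{a_i}\otimes V_i$, so its trace equals the dimension of that image, $\Tr(\hat{P}_i)=a_i\dim(V_i)$. On the right-hand side, linearity of the trace together with the definition $\chi(U)=\Tr(\phi(U))$ gives $\frac{\dim V_i}{|G|}\sum_{U}\chi_i(U)^*\chi(U)$. Equating the two expressions and dividing through by the nonzero factor $\dim(V_i)$ yields exactly $a_i=\frac{1}{|G|}\sum_U\chi_i(U)^*\chi(U)$. The degenerate case $a_i=0$, where $\phi_i$ does not occur in $\phi$, is handled automatically: then $\hat{P}_i=0$ and both sides vanish.

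I expect essentially no obstacle here, since Fact~\ref{fact:Projection} does the heavy lifting; the only point needing a word of justification is the standard fact that the trace of an orthogonal projector equals the dimension of its image (rank equals trace for idempotents).

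For completeness I note an alternative, self-contained route that does not invoke the projection formula: write $\chi=\sum_j a_j\chi_j$ and appeal to the first orthogonality relation $\frac{1}{|G|}\sum_U\chi_i(U)^*\chi_j(U)=\delta_{ij}$, which collapses the weighted sum to $a_i$. This is equally valid but requires first establishing orthonormality of the irreducible characters, which itself rests on Schur's Lemma (Fact~\ref{fact:Schur}); the projection-formula route is strictly shorter given what is already available in the paper.
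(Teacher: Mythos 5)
Your proof is correct. Note, however, that the paper itself offers no argument for this fact at all: it simply defers to the standard reference \cite{fulton2013representation}, where the usual proof proceeds via your ``alternative route,'' i.e.\ expanding $\chi=\sum_j a_j\chi_j$ and invoking orthonormality of irreducible characters. Your primary route is genuinely different and arguably better suited to this paper: since Fact~\ref{fact:Projection} is already stated, taking the trace of $\hat{P}_i=\frac{\dim (V_i)}{|G|}\sum_{U\in G}\chi_i(U)^*\phi(U)$ and using $\Tr(\hat{P}_i)=a_i\dim(V_i)$ gives the result in one line, with the $a_i=0$ case handled automatically. Two small remarks: the self-adjointness you mention is not needed (and depends on the choice of inner product) --- idempotence alone gives $\Tr = \mathrm{rank}$, which you correctly identify as the only point requiring justification; and the two routes are not really independent, since the projection formula in \cite{fulton2013representation} is itself derived from the same orthogonality machinery, so your approach is best viewed as a shorter path \emph{given} the toolkit the paper has already assembled, rather than a logically lighter one.
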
\noindent
For a proof, see \cite{fulton2013representation}.

Since the trivial irrep has $\chi_i(U)=1$, we have that the multiplicity of the trivial irrep in $(\eta_{\text{ideal}}\otimes \phi_i)$ is given by
$$
\mathbbm{E}\left[\text{Tr}\left(\hat{U}\otimes \phi_i^*(U)\right)\right]=\mathbbm{E}\left[\chi_i^*(U)\text{Tr}\left(\hat U\right)\right]=a_i.
$$
In other words, $\tilde\eta_{\text{ideal}}(i)$ is a rank-$a_i$ projector.

We can explicitly find the form of $\tilde\eta_{\text{ideal}}(i)$ by constructing $a_i$ trivial irreps of $(\eta_{\text{ideal}}\otimes \phi_i)$. Let $\{|\psi^{i}_n\rangle\rangle\}$ be an orthonormal basis for $\mathcal{H}_i$, and let $\{|\psi^{i,j}_n\rangle\rangle\}$ be the corresponding basis for the $j$th copy of $\mathcal{H}_i$ inside $\mathcal{H}\otimes\mathcal{H}$. It is straightforward to show that
\begin{equation*}
|\Psi^{i,j}\rangle\rangle := \frac{1}{\sqrt{d_i}}\sum_{n=1}^{d_i}|\psi_n^{i,j}\rangle\rangle\otimes|\psi_n^i\rangle\rangle
\end{equation*}
spans an irrep for each $j=1,...,a_i$. Therefore,
\begin{equation}
    \tilde\eta_{\text{ideal}}(i)=\sum_{j=1}^{a_i}|\Psi^{i,j}\rangle\rangle\langle\langle\Psi^{i,j}|
    \label{eq:idealEta}
\end{equation}

A realistic experiment will have gates described by a function $\eta(U)$ that is some small perturbation from $\eta_{\text{ideal}}(U)$. Perturbing $\eta_{\text{ideal}}(U)$ by a small amount will perturb $\tilde\eta_{\text{ideal}}(i)$ by a small amount, since the Fourier transform is a linear operation. Thus $\tilde\eta(i)$ is a perturbation of a rank-$a_i$ projector for high-fidelity gates, so that $\tilde{\eta}(i)$ has $a_i$ eigenvalues close to $1$, which we will denote by $\lambda_{i,j}$, and the remaining eigenvalues close to $0$. This is sufficient to make $S_i(N)$ dominanted by $a_i$ exponential decays, corresponding to the $a_i$ largest eigenvalues (see Eq. \ref{eq:decayFormFourier}). This proves Thm. \ref{thm:gateDependence}.

\subsection{Computing the average fidelity}
If we define $\eta(U)=\hat\Lambda_U \hat{U}$, with $\Lambda_U$ the gate-dependent error channel, then we can define an average fidelity
\begin{equation}
F_{\text{av}} = \frac{\mathbbm{E}\left[\text{Tr}(\hat\Lambda_U)\right]+d}{d^2+d}
\label{eq:AverageFidelityGateDependent}
\end{equation}
Comparing to Eq. \ref{eq:TraceFidelity}, we see that this is simply the average of the individual fidelities $F_{\Lambda_U}$.

We can express $F_{\text{av}}$ in terms of the $a_i$ largest eigenvalues of $\tilde\eta(i)$ as follows. We first note that we may write
\begin{align*}
\mathbbm{E}\left[\text{Tr}\left(\hat\Lambda_U\right)\right]&=\mathbbm{E}\left[\text{Tr}\left(\eta(U)\eta^\dagger_{\text{ideal}}(U)\right)\right]\\
&=\sum_i d_i\text{Tr}\left(\tilde\eta(i)\tilde\eta^\dagger_{\text{ideal}}(i)\right)\\
&=\sum_{i=1}^I\sum_{j=1}^{a_i}d_i\langle\langle\Psi^{i,j}|\tilde\eta(i)|\Psi^{i,j}\rangle\rangle
\end{align*}
where in the second line we used the Parseval identity (Eq. \ref{eq:ParsevalIdentity}) to move to Fourier space, and in the third line we used the explicit form of $\tilde\eta_{\text{ideal}}(i)$ (Eq. \ref{eq:idealEta}). To first order in $\left(\tilde\eta(i)-\tilde\eta_{\text{ideal}}(i)\right)$, we have that
$$
\sum_{j=1}^{a_i}\langle\langle\Psi^{i,j}|\tilde\eta(i)|\Psi^{i,j}\rangle\rangle \approx \sum_{j=1}^{a_i}\lambda_{i,j}
$$
Therefore, we can rewrite Eq. \ref{eq:AverageFidelityGateDependent} as
$$
F_{\text{av}} \approx \frac{\sum_{i=1}^I d_i\sum_{j=1}^{a_i}\lambda_{i,j}+d}{d^2+d}
$$
which is the same form as Eq. \ref{eq:fidelityEst} in the case of gate-independent noise.

%\endblockedit

\section{The generalized Clifford group is a unitary 2-design}
\label{appendix:WeylGroup}

In this Appendix, we prove the generalized Clifford group considered in Section \ref{section:irrepsSubspace} is a unitary 2-design. We will give a fully general treatment for arbitrary sets of $n$ qudits with $d>2$ prime, although we need only the case of $n=1$, $d=3$ for our subspace benchmarking above. This result can be inferred from results proven in \cite{chau2005unconditionally}, but we give a direct proof below. We first review the construction of the generalized Clifford groups as introduced in \cite{hostens2005stabilizer}.

For a $d$-level system, define analogues of the $X$ and $Z$ qubit operators \cite{gottesman1998fault}:
$$
X|z\rangle = |z+1\rangle \qquad Z|z\rangle = \omega^z|z\rangle
$$
where $\omega := e^{2\pi i/d}$ and addition is performed modulo $d$. These generalized $X$ and $Z$ operators are unitary and satisfy $ZX = \omega XZ$.

For a set of $n$ qudits, define the $d$-dimensional generalization of the Pauli group as (this only holds for $d$ odd; see \cite{hostens2005stabilizer} for the definition for $d$ even):
$$\mathcal{P}:=\{\omega^\eta X_1^{a_1} Z_1^{b_1}\cdots X_n^{a_n} Z_n^{b_n}:\eta,a_i,b_i\in \mathbbm{Z}_d\}.$$
We will write a general element of the Pauli group as 
$$
\omega^\eta X_1^{a_1} Z_1^{b_1}\cdots X_n^{a_n} Z_n^{b_n}:=\omega^\eta \XZ(\vec v),\quad \vec{v}:=\left(\begin{smallmatrix}\vec a\\\vec b\end{smallmatrix}\right).
$$
Multiplication of general elements of the Pauli group is given by
$$
\XZ(\vec{v})\XZ(\vec{w}) = \omega^{\vec{v}^TQ\vec{w}}\XZ(\vec{v}+\vec{w})
$$
where $Q$ is defined by $Q = \left(\begin{smallmatrix}0&0\\\mathbbm{1} & 0\end{smallmatrix}\right)$. This demonstrates that $\mathcal{P}$ is indeed a group.

The generalized Clifford group is defined to be the set of all unitaries that stabilize $\mathcal{P}$:
$$
G = \{U :U\mathcal{P} U^\dagger = \mathcal{P}\}.
$$
An element $U\in G$ is defined (up to a global phase) by its action on $X_i$ and $Z_i$. We define the matrix $M$ and vector $\vec{h}$ such that for each unit vector $\hat{e}_i\in \mathbbm{Z}_d^{2d}$ we have
$$
U \XZ(\hat{e}_i) U^\dagger = \omega^{h_i}\XZ(M\hat{e}_i)
$$
It then follows that a general element $\XZ(a)$ is transformed as
\begin{equation}
\label{eq:generalConjugate}
\begin{array}{c}
U\XZ(\vec{v})U^\dagger = \omega^\eta \XZ(M\vec{v})\\
\eta := \left(\vec{h}-\frac{\text{diag}(M^T Q M)}{2}\right)^T\vec{v}+\vec{v}^T\left(M^T Q M-Q\right)\frac{\vec{v}}{2}
\end{array}
\end{equation}

Not every matrix $M$ can be realized by a unitary operator. To derive a restriction on $M$, we consider the commutation relation (where we define $P=Q-Q^T$):
\begin{align*}
\XZ(\vec{v})\XZ(\vec{w})&=\omega^{\vec{v}^TP\vec{w}}\XZ(\vec{w})\XZ(\vec{v})\\
U\XZ(\vec{v})\XZ(\vec{w})U^\dagger&=\omega^{\vec{v}^TP\vec{w}}U\XZ(\vec{w})\XZ(\vec{v})U^\dagger\\
\XZ(M\vec v)\XZ(M\vec w)&=\omega^{\vec{v}^TP\vec{w}}\XZ(M\vec w)\XZ(M\vec v)\\
\omega^{\vec{v}^TM^TPM\vec{w}}\XZ(M\vec w)\XZ(M\vec v)&=\omega^{\vec{v}^TP\vec{w}}\XZ(M\vec w)\XZ(M\vec v)
\end{align*}
where we have ignored phase factors common to both sides. We see that we must have $P=M^TP M$; such an $M$ is called a {\bf symplectic matrix}. This is the only restriction on $M,h$, as \cite{hostens2005stabilizer} demonstrated how to explicitly construct unitaries to implement any $M,h$ provided $M$ is symplectic.

To prove $G$ forms a unitary $2$-design, we need to show (see Section \ref{section:irrepsSubspace} of the main text) 
$$
\frac{1}{|G|}\sum_{U\in G} p(U,U^*)=\int dU\ p(U,U^*)
$$
for any balanced polynomial $p(U,U^*)$ of degree at most $2$ in the elements of $U$ and $U^*$. Any such $p(U,U^*)$ can be written as a linear combination of terms of the form $UAU^\dagger B U C U^\dagger$ and $UDU^\dagger$, where $A,B,C,D$ are matrices. We are thus reduced to proving
\begin{equation}
\frac{1}{|G|}\sum_{U\in G} UAU^\dagger B U C U^\dagger=\int dU\ UAU^\dagger B U C U^\dagger
\label{eq:firstCondition}
\end{equation}
\begin{equation}
\frac{1}{|G|}\sum_{U\in G}  U D U^\dagger=\int dU\ UDU^\dagger
\label{eq:secondCondition}
\end{equation}
for arbitrary matrices $A,B,C,D$.

In the following, we will make repeated use of an elementary identity of complex roots of unity.
\begin{fact}
If $\vec{w}\in \mathbbm{Z}_d^{2n}\setminus\{0\}$ is any nonzero vector, then $$\sum_{\vec{v}}\omega^{\vec{v}^T\vec{w}}=0.$$
\label{fact:identity}
\end{fact}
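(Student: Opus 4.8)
The plan is to exploit the fact that the bilinear form $\vec{v}^T\vec{w}$ is a sum of independent single-coordinate contributions, so that the sum over the whole lattice $\mathbbm{Z}_d^{2n}$ factorizes into a product of one-dimensional sums. First I would write $\vec{v}^T\vec{w}=\sum_{k=1}^{2n}v_kw_k$ and use multiplicativity of the exponential to obtain
$$
\sum_{\vec{v}\in\mathbbm{Z}_d^{2n}}\omega^{\vec{v}^T\vec{w}}=\prod_{k=1}^{2n}\left(\sum_{v_k\in\mathbbm{Z}_d}\omega^{v_kw_k}\right),
$$
which reduces the claim to evaluating each factor separately.

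Next I would evaluate the single-coordinate sum $\sum_{v\in\mathbbm{Z}_d}\omega^{vw_k}$ as a geometric series. If $w_k\equiv 0\pmod d$ then every term equals $1$ and the factor is $d$; otherwise $\omega^{w_k}$ is a $d$-th root of unity different from $1$, and the geometric-series formula gives $\frac{(\omega^{w_k})^d-1}{\omega^{w_k}-1}=0$, since $(\omega^{w_k})^d=\omega^{dw_k}=1$ while the denominator is nonzero.

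Finally, since $\vec{w}\neq 0$ by hypothesis, at least one coordinate $w_k$ is nonzero modulo $d$, so at least one factor in the product above vanishes and the entire product is zero, which establishes the claim.

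The argument is elementary, so there is no substantial obstacle; the only point that requires care is verifying that $\omega^{w_k}\neq 1$ precisely when $w_k\not\equiv 0\pmod d$, which holds because $\omega=e^{2\pi i/d}$ is a \emph{primitive} $d$-th root of unity. This primitivity is exactly what guarantees the nonvanishing denominator in the geometric sum and hence the clean cancellation.
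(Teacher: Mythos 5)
Your proof is correct. The paper itself states this fact without proof, treating it as an elementary identity of complex roots of unity; your argument—factorizing the lattice sum into a product of one-dimensional geometric series and noting that any nonzero coordinate $w_k$ makes its factor vanish because $\omega^{w_k}\neq 1$ precisely when $w_k\not\equiv 0\pmod d$—is exactly the standard justification one would supply, with the primitivity of $\omega$ correctly identified as the crucial point.
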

\subsection{Degree 1 polynomials}

Let's start by proving Eq. \ref{eq:secondCondition}. Without loss of generality, we can assume $D=\XZ(\vec{v})$, since such matrices form a basis. The RHS of this equation is invariant under conjugation by arbitrary unitaries; thus, it must be proportional to the identity matrix. Noting that $\Tr(\text{RHS})=\Tr(D)$ and that $\Tr\left[\XZ(\vec{v})\right]=0$ whenever $\vec{v}\neq 0$, we find
\begin{align*}
    \text{RHS} &= \left\{\begin{array}{ll}\mathbbm{1}, & \vec{v}=0\\0, &\text{else}.\end{array}\right.
\end{align*}
We evaluate the LHS by using Eq. \ref{eq:generalConjugate} for the conjugation of a general Pauli element:
\begin{align*}
    \text{LHS}&=\frac{1}{|G|}\sum_{U\in G} U\XZ(\vec{v}) U^\dagger\\
    &=\frac{1}{|G|}\sum_{\substack{M,\vec{h}\\M^TPM=P}}\omega^{\eta}\XZ(M\vec{v})
\end{align*}
We note that $\eta = \vec{h}^T\vec{v}+(\cdots)$, where $(\cdots)$ denotes terms that do not depend on $\vec{h}$. We see by Fact \ref{fact:identity} that for fixed $M$ the sum over $\vec h$ gives zero unless $\vec{v}=0$, while when $\vec{v}=0$ it is clear $\text{LHS}=\mathbbm{1}$. This proves Eq. \ref{eq:secondCondition}.

\subsection{Degree 2 polynomials}

We now turn to Eq. \ref{eq:firstCondition}. We prove this using methods from \cite{dankert2009exact}, who proved the case $d=2$. First, we note that the RHS of Eq. \ref{eq:firstCondition} is covariant in $B$: sending $B\rightarrow UBU^\dagger$ sends $\text{RHS}\rightarrow U(\text{RHS})U^\dagger$ for any unitary $U$. The only covariant linear functions of $B$ are $\frac{\Tr(B)\mathbbm{1}}{d^n}$ and $\left[B-\frac{\Tr(B)\mathbbm 1}{d^n}\right]$, so the RHS must be of the form \cite{emerson2005scalable}
\begin{equation}
\text{RHS}=q\left[B-\frac{\Tr(B)\mathbbm 1}{d^n}\right]+p\frac{\Tr(B)\mathbbm 1}{d^n}.
\label{eq:overallForm}
\end{equation}
To determine $p$ we plug in $B=\mathbbm{1}$ and note that  
$$
\text{RHS}=\int dU\ UA C U^\dagger = \frac{\Tr(AC)}{d^n}\mathbbm{1},
$$
while simultaneously according to Eq. \ref{eq:overallForm},
$$
\text{RHS} = p\mathbbm{1}
$$
so $p=\frac{\Tr(AC)}{d^n}$. To determine $q$, we consider plugging in $B=|i\rangle\langle j|$. Denoting the result when plugging in $B=|i\rangle\langle j|$ as $(\text{RHS})_{ij}$, we can evaluate
\begin{align*}
    \sum_{i,j}\langle i|(\text{RHS})_{ij}|j\rangle &= \sum_{i,j}\int dU\ \langle i|UAU^\dagger|i\rangle\langle j|UCU^\dagger|j\rangle\\
    &=\Tr(A)\Tr(C).
\end{align*}
On the other hand, Eq. \ref{eq:overallForm} gives
$$
\sum_{i,j}\langle i|(\text{RHS})_{ij}|j\rangle =(d^{2n}-1)q+p
$$
so $q = \frac{d^n\Tr(A)\Tr(C)-\Tr(AC)}{d^n(d^{2n}-1)}$. Thus in total, we have

\begin{multline}
\text{RHS}=\frac{d^n\Tr(A)\Tr(C)-\Tr(AC)}{d^n(d^{2n}-1)}\left[B-\frac{\Tr(B)\mathbbm 1}{d^n}\right]\\
+\frac{\Tr(AC)\Tr(B)\mathbbm 1}{d^{2n}}.
\label{eq:FinalRHSForm}
\end{multline}

Without loss of generality, we can specialize to the case where $A=\XZ(\vec{v}_A)$, $B=\XZ(\vec{v}_B)$, and $C=\XZ(\vec{v}_C)$, whence Eq. \ref{eq:FinalRHSForm} gives 
$$
\text{RHS}=\left\{\begin{array}{ll}
\XZ(\vec{v}_B), & \vec{v}_A=\vec{v}_C=0\\
\omega^{-\vec{v}_A^T Q\vec{v}_A}\mathbbm{1}, & \vec{v}_A=-\vec{v}_C\neq 0,\ \vec{v}_B=0\\
-\frac{\omega^{-\vec{v}_A^T Q\vec{v}_A}}{d^{2n}-1}\XZ(\vec{v}_B), & \vec{v}_A=-\vec{v}_C\neq 0,\ \vec{v}_B\neq 0\\
0, & \text{else}.
\end{array}\right.
$$

We now need to evaluate the LHS of Eq. \ref{eq:firstCondition} for each of the four cases above. In the first case, we find
\begin{align*}
    \text{LHS} = \frac{1}{|G|}\sum_{U\in G} \XZ(\vec{v}_B)=\XZ(\vec{v}_B)
\end{align*}

In the second case, we use Eq. \ref{eq:generalConjugate} to simplify each summand in the LHS
\begin{align*}
    U&\XZ(\vec{v}_A)U^\dagger U\XZ(\vec{v}_C)U^\dagger\\
    &=\omega^{\eta_A+\eta_C} \XZ(M\vec{v}_A)\XZ(M\vec{v}_C)\\
    &=\omega^{\eta_A+\eta_C+\vec{v}_A^TM^TQM\vec{v}_C} \mathbbm{1}\\
    &=\omega^{\vec{v}_A^T\left(M^T Q M-Q\right)\vec{v}_A-\vec{v}_A^TM^TQM\vec{v}_A} \mathbbm{1}\\
    &=\omega^{-\vec{v}_A^T Q \vec{v}_A} \mathbbm{1}.
\end{align*}
Therefore, the average over the group $G$ gives $\omega^{-\vec{v}_A^T Q \vec{v}_A} \mathbbm{1}$.

In the third case, we again simplify each summand using Eq. \ref{eq:generalConjugate}, but with an additional $B$ in between:
\begin{align*}
&U\XZ(\vec{v}_A)U^\dagger \XZ(\vec{v}_B)U\XZ(\vec{v}_C)U^\dagger\\
&\ =\omega^{\eta_A+\eta_C} \XZ(M\vec{v}_A)\XZ(\vec{v}_B)\XZ(M\vec{v}_C)\\
&\ =\omega^{\eta_A+\eta_C+\vec{v}_A^TM^TQ\vec{v}_B-\vec{v}_B^TQM\vec{v}_A-\vec{v}_A^TM^TQM\vec{v}_A}\XZ(\vec{v}_B)\\
&\ =\omega^{\vec{v}_A^TM^TP\vec{v}_B-\vec{v}_A^TQ\vec{v}_A}\XZ(\vec{v}_B).
\end{align*}
The average over $\vec{h}$ does not affect this sum, so we only need to consider the average over $M$. We evaluate the average by realizing that if $d$ is prime, the Clifford group sends every non-identity Pauli string to every other non-identity Pauli string uniformly. Thus, letting $M$ run over all symplectic matrices makes $M\vec{v}_A$ run uniformly over all vectors $M\vec{v}_A\in \mathbbm{Z}_d^{2n}\setminus \{0\}$. Therefore, the LHS is given by
\begin{align*}
\text{LHS}&=\frac{1}{d^{2n-1}}\sum_{\vec{v}\neq 0}\omega^{\vec{v}^T P \vec{v}_B-\vec{v}_A^TQ\vec{v}_A}\XZ(\vec{v}_B)\\
&=-\frac{\omega^{-\vec{v}_A^TQ\vec{v}_A}}{d^{2n-1}}\XZ(\vec{v}_B)\left[1-\sum_{\vec{v}}\omega^{\vec{v}^T P \vec{v}_B}\right]\\
&=-\frac{\omega^{-\vec{v}_A^TQ\vec{v}_A}}{d^{2n-1}}\XZ(\vec{v}_B)
\end{align*}
where in the final step, we used Fact \ref{fact:identity}.

In the last case, we have that each summand is of the form
\begin{align*}
&U\XZ(\vec{v}_A)U^\dagger \XZ(\vec{v}_B)U\XZ(\vec{v}_C)U^\dagger\\
&\ =\omega^{\eta_A+\eta_C+\vec{v}_A^TM^TQ\vec{v}_B-\vec{v}_B^TQM\vec{v}_A-\vec{v}_A^TM^TQM\vec{v}_A}\XZ(\vec{v}_B)\\
&\ =\omega^{\vec{h}^T(\vec{v}_A+\vec{v}_C)+(\cdots)}\XZ(\vec{v}_B)
\end{align*}
where $(\cdots)$ represents terms that are independent of $\vec h$. We can again apply Fact \ref{fact:identity} to find that the sum over $\vec{h}$ gives zero. We have thus proved $\text{LHS}=\text{RHS}$ for each of the four cases, which establishes Eq. \ref{eq:firstCondition}.

\section{Leakage RB irreps}
\label{appendix:choosingLeakageGroup}

Let $G$ be a unitary group \rev{indexed} by \rev{$b\in B$},
\rev{\begin{align*}G&=\{U_{b,\sigma}:b\in B\ \sigma=\pm 1\}\\&=\{U_{1,b}\oplus \sigma U_{2,b}:b\in B,\ \sigma=\pm 1\},\end{align*}}\noindent
where \rev{$G_1=\{U_{1,b}:b\in B\}$} and \rev{$G_1=\{U_{2,b}:b\in B\}$} are each unitary 1-designs on their respective subspaces. \rev{First, we prove} that $|\mathbbm{1}_1\rangle\rangle$ and $|\mathbbm{1}_2\rangle\rangle$ are the only trivial irreps of the natural representation of $G$. Next, we prove that if \rev{$G_1$} and \rev{$G_2$} are in addition unitary 2-designs and $d_1\neq d_2$ then $\mathcal{H}_{1\perp}$ is irreducible and multiplicity-free.

We start with the trivial irreps. It is clear that both $|\mathbbm{1}_1\rangle\rangle$ and $|\mathbbm{1}_2\rangle\rangle$ are trivial irreps. The trivial irrep has $\chi_0(U)=1$, so Fact \ref{fact:orthogonality} gives
\rev{\begin{align*}
a_0&=\frac{1}{|G|}\sum_{U\in G} \chi(U)\\
&=\frac{1}{2|B|}\sum_{\substack{b\in B\\\sigma=\pm}} \Tr(U_{b,\sigma}\otimes U_{b,\sigma}^*)\\
&=\frac{1}{2|B|}\sum_{\substack{b\in B\\\sigma=\pm}}\left[\begin{array}{l} \Tr(U_{1,b}\otimes U_{1,b}^*)+\sigma\Tr(U_{1,b}\otimes U_{2,b}^*)\\+\sigma\Tr(U_{2,b}\otimes U_{1,b}^*)+\Tr(U_{2,b}\otimes U_{2,b}^*)\end{array}\right]\\
&=\frac{1}{|B|}\sum_{b\in B} \left[\Tr(U_{1,b}\otimes U_{1,b}^*)+\Tr(U_{2,b}\otimes U_{2,b}^*)\right]\\
&=\int dU_1 \Tr(U_{1}\otimes U_{1}^*)+\int dU_2\Tr(U_{2}\otimes U_{2}^*)
\end{align*}}\noindent
where in the last line we used the fact that \rev{$G_1$} and \rev{$G_2$} are unitary 1-designs. These integrals just give the number of trivial irreps of the full unitary group on $\mathcal{H}_1$ and $\mathcal{H}_2$, respectively, which are known to be $1$. Thus, there are only two trivial irreps of the full unitary group.

Now, we consider $\mathcal{H}_{1\perp}$. First, we show $\mathcal{H}_{1\perp}$ is irreducible by using Fact \ref{fact:magnitudeCharacter}. Noting \rev{$\chi_{1,\perp}(U_{b,\pm})=\left(|\Tr(U_{1,b})|^2-1\right)$}, we have
\rev{\begin{align*}
\frac{1}{|G|}\sum_{U\in G}|\chi_{1\perp}(U)|^2&=\frac{1}{2|B|}\sum_{\substack{b\in B\\\sigma=\pm}}\left(|\Tr(U_{1,b})|^2-1\right)^2\\
&=\frac{1}{|B|}\sum_{b\in B}\left(|\Tr(U_{1,b})|^2-1\right)^2\\
&=\int dU_1\ \left(|\Tr(U_1)|^2-1\right)^2 \\
&= 1
\end{align*}}\noindent
where the third equality follows from the unitary 2-design property, and the fourth follows from the fact that $\mathcal{H}_{1\perp}$ is an irrep of the natural representation of the full unitary group on $\mathcal{H}_1\otimes\mathcal{H}_1$. Thus, we have $\mathcal{H}_{1\perp}$ irreducible.

To finish, we must prove that no other irrep of the natural representation is isomorphic to $\mathcal{H}_{1\perp}$. Every irrep of the natural representation is a subrepresentation of $\mathcal{H}_1\otimes \mathcal{H}_1$, $\mathcal{H}_1\otimes \mathcal{H}_2$, $\mathcal{H}_2\otimes \mathcal{H}_1$, or $\mathcal{H}_2\otimes \mathcal{H}_2$, since these subspaces are all invariant under the action of $G$. We know that the decomposition of $\mathcal{H}_1\otimes\mathcal{H}_1$ into irreps is $\mathcal{H}_1\otimes\mathcal{H}_1\simeq \mathcal{H}_{10}\otimes\mathcal{H}_{1\perp}$, by our work above, and thus no irreps in $\mathcal{H}_1\otimes\mathcal{H}_1$ can be isomorphic to $\mathcal{H}_{1\perp}$ besides $\mathcal{H}_{1\perp}$ itself.  Similarly, we know that the decomposition of $\mathcal{H}_2\otimes\mathcal{H}_2$ into irreps is $\mathcal{H}_2\otimes\mathcal{H}_2\simeq \mathcal{H}_{20}\otimes\mathcal{H}_{2\perp}$. We can ensure $\mathcal{H}_{1\perp}\not\simeq\mathcal{H}_{2\perp}$ by requiring $d_1\neq d_2$, as in the main text. We then have that no isomorphic representation exists in $\mathcal{H}_2\otimes\mathcal{H}_2$. For $\mathcal{H}_1\otimes\mathcal{H}_2$, and similarly for $\mathcal{H}_2\otimes\mathcal{H}_1$, we note that the character of the subrepresentation $\mathcal{H}_1\otimes\mathcal{H}_2$ is given by \rev{$\chi_{12}(U_{b,\sigma})=\sigma\Tr(U_{1,b})\Tr(U_{2,b})^*$}, and use Fact \ref{fact:orthogonality}:
\rev{\begin{align*}
    \frac{1}{|G|}&\sum_{U\in G} \chi_{1\perp}^*(U)\chi_{12}(U) \\
    &= \frac{1}{2|B|}\sum_{\substack{b\in B\\\sigma=\pm}} \sigma(|\Tr(U_{1,b_1})|^2-1)\Tr(U_{1,b})\Tr(U_{2,b})^*\\
    &=0
\end{align*}}\noindent
which shows that $\mathcal{H}_{1\perp}$ is an irrep with multiplicity $1$.

Note that we could also consider a group \rev{$$G'=\{U_{b,\phi}:b\in B\}=\{U_{1,b}\oplus(e^{i\phi}U_{2,b}):b\in B\}$$}\noindent with an arbitrary phase between subspaces $1$ and $2$ rather than simply a $\pm 1$ phase; the proof is identical. Many experimental platforms can easily implement a random phase between two subspaces, especially if the leakage subspace is at a different energy than the computational subspace, making this group potentially easier to sample from. We can also still compute $F_{\Lambda,1}$ with $\{U_{2,a}\}$ only a unitary $1$-design, provided $\mathcal{H}_2\otimes \mathcal{H}_2$ does not contain an irrep isomorphic to $\mathcal{H}_{1\perp}$. Finally, in the case that $d_1 = d_2$, we can instead simply require that there exists some \rev{$b\in B$} such that \rev{$|\Tr(U_{1,b})|^2\neq |\Tr(U_{2,b})|^2$}, a much weaker condition that still suffices to ensure $\mathcal{H}_{1\perp}\not\simeq\mathcal{H}_{2\perp}$. 

\end{document}